\newcommand{\pg}[1]{\left\{#1\right\}}
\newcommand{\pt}[1]{\left(#1\right)}
\newcommand{\vir}[1]{``#1''}
\renewcommand{\d}{\mathrm{d}}
\def\1{\mathbf{1}}
\newcommand{\Data}{X^{(n)}} 
\newcommand{\DataXe}{X^{(n)}} 
\begin{document}
\title{
Bayes and maximum likelihood for 
$L^1$-Wasserstein deconvolution of Laplace mixtures
}


\author{}


\author{Catia Scricciolo
\thanks{Catia Scricciolo\\ 
Dipartimento di Scienze Economiche, Universit\`a degli Studi di Verona,
Polo Universitario Santa Marta, Via Cantarane 24,
I-37129 Verona (VR), ITALY, \email{catia.scricciolo@univr.it}}}

\institute{}

%

%
\date{
}
%

\titlerunning{Bayes and ML for $L^1$-Wasserstein deconvolution of Laplace mixtures}

\authorrunning{C. Scricciolo}

\maketitle

\begin{abstract}
We consider the problem of recovering a distribution function on the real line from observations additively contaminated with errors following the standard Laplace distribution. Assuming that the latent distribution is completely unknown leads to a nonparametric deconvolution problem. We begin by studying the rates of convergence relative to the $L^2$-norm and the Hellinger metric for the direct problem of estimating the sampling density, which is a mixture of Laplace densities with a possibly unbounded set of locations: the rate of convergence for the Bayes' density estimator corresponding to a Dirichlet process prior over the space of all mixing distributions on the real line matches, up to a logarithmic factor, with the $n^{-3/8}\log^{1/8}n$ rate for the maximum likelihood estimator. Then, appealing to an inversion inequality translating the $L^2$-norm and the Hellinger distance between general kernel mixtures, with a kernel density having polynomially decaying Fourier transform, into any $L^p$-Wasserstein distance, $p\geq1$, between the corresponding mixing distributions, provided their Laplace transforms are finite in some neighborhood of zero, we derive the rates of convergence in the $L^1$-Wasserstein metric for the Bayes' and maximum likelihood estimators of the mixing distribution. Merging in the $L^1$-Wasserstein distance between Bayes and maximum likelihood follows as a by-product, along with an assessment on the stochastic order of the discrepancy between the two estimation procedures.
\end{abstract}


\keywords{Deconvolution $\cdot$ Dirichlet process $\cdot$ entropy $\cdot$ Hellinger distance $\cdot$ Laplace mixture $\cdot$ maximum likelihood $\cdot$ posterior distribution $\cdot$ rate
of convergence $\cdot$ sieve $\cdot$ Wasserstein distance}


\section{Introduction}
The problem of recovering a distribution function from observations additively contaminated with measurement errors is the object of study in this note. 
Assuming data are sampled from a convolution kernel mixture, the interest is in \vir{estimating} the mixing or latent distribution from contaminated observations.
The statement of the problem is as follows. Let $X$ be a random variable (r.v.) with probability measure $P_0$ 
on the Borel-measurable space $(\mathbb{R},\,\mathscr{B}(\mathbb{R}))$, with Lebesgue density $p_0:=\d P_0/\d \lambda$. 
Suppose that
\[X=Y+Z,\]
where $Y$ and $Z$ are independent,
unobservable random variables,
$Z$ having Lebesgue density $f$. We examine the case where the error 
has the standard Laplace distribution with density
$$f(z)=\frac{1}{2}e^{-|z|}, \quad
z\in\mathbb{R}.$$
The r.v. $Y$ has unknown distribution $G_0$ on some measurable space
$(\mathscr Y,\,\mathscr B(\mathscr Y))$, with $\mathscr Y\subseteq \mathbb{R}$ and 
$\mathscr B(\mathscr{Y})$ the Borel $\sigma$-field on 
$\mathscr Y$.
The density $p_0$ is then
the convolution of $G_0$ and $f$,
\[p_0(x)=(G_0\ast f)(x)=\int_{\mathscr Y}f(x-y)\,\d G_0(y),\quad x\in\mathbb{R}.\]
In what follows, we also write $p_0\equiv p_{G_0}$ to stress the
dependence of $p_0$ on $G_0$. Letting $\mathscr G$ be the set of all probability measures $G$ on 
$(\mathscr Y,\,\mathscr B(\mathscr Y))$, the parameter space
\[\mathscr P:=\Bigg\{p_G(\cdot):=\int_{\mathscr Y}f(\cdot-y)\,\d G(y),\, G\in\mathscr G\Bigg\}\]
is the collection of all convolution Laplace mixtures 
and the model is nonparametric.

Suppose we observe $n$ independent copies $X_1,\,\ldots,\,X_n$ of $X$. 
The r.v.'s $X_1,\,\ldots,\,X_n$ are independent and identically distributed (i.i.d.) according to the density $p_0\equiv p_{G_0}$ on the real line. 
The interest is in recovering the mixing distribution $G_0\in\mathscr G$ from indirect observations.
Deconvolution problems may arise in a wide variety of contexts,
the error distribution being typically modelled as a Gaussian, 
even if also the Laplace has relevant applications. 
Full density deconvolution, together with the related many normal 
means problem, has drawn attention in the literature since the late 1950's
and different deconvolution methods have been proposed and developed since then taking the 
frequentist approach, the most popular being based on nonparametric maximum likelihood and kernel methods. Rates of convergence have been mostly investigated for \emph{density} deconvolution: 
Fan (1991a, 1991b) showed that deconvolution kernel density estimators achieve global optimal rates for weighted $L^p$-risks, $p\geq1$, when the smoothness of the density to be recovered is measured in terms of the number of its 
derivatives. Hall and Lahiri (2008) considered estimation of the \emph{distribution function} using 
the cumulative distribution function corresponding to the deconvolution kernel density estimator
and showed that it attains minimax-optimal pointwise and global rates for the integrated mean-squared error over different functional classes for the error and latent distributions, smoothness being described through the tail behaviour of their Fourier transforms.
For a comprehensive account on the topic, the reader may refer to the monograph of Meister (2009). In this note, we do not assume that the probability measure $G_0$ possesses Lebesgue density. Wasserstein metrics are then particularly well-suited as global loss functions: convergence in $L^p$-Wasserstein
metrics for discrete mixing distributions has, in fact, a natural interpretation in terms of convergence
of the single supporting atoms of the probability measures involved. 
Dedecker \emph{et al}. (2015) have obtained a lower bound on the rate of convergence for the $L^p$-Wasserstein risk, $p\geq1$, when no smoothness assumption,
except for a moment condition, is imposed on the latent distribution and the error distribution is ordinary smooth, the Laplace being a special case.

Deconvolution problems have only recently begun to be studied from a Bayesian perspective: the typical scheme considers the mixing distribution as a draw  
from a Dirichlet process prior. 
Posterior contraction rates for recovering the mixing distribution 
in $L^p$-Wasserstein metrics have been
investigated in Nguyen (2013) and Gao and van der Vaart (2016), even though
the upper bounds in these articles do not match with the lower bound in
Dedecker \emph{et al}. (2015). 
Minimax-optimal adaptive recovery rates for mixing densities belonging to Sobolev spaces 
have been instead obtained by Donnet \emph{et al}. (2018)
in a fully Bayes as well as in an empirical Bayes approach to inference, the latter accounting for a data-driven choice of the prior hyperparameters of the Dirichlet process baseline measure.

In this note, we study nonparametric Bayes and maximum likelihood estimation of the mixing distribution $G_0$, when no smoothness assumption is imposed on it. 
The analysis begins with the estimation of the sampling density $p_0$: estimating the \emph{mixed} density $p_0$ can, in effect, be the first step for recovering the \emph{mixing} distribution $G_0$. 
Taking a Bayesian approach, if the random density $p_G$ is modelled as 
a Dirichlet-Laplace mixture, then $p_0$ can be consistently estimated at a rate 
$n^{-3/8}$, up to a $(\log n)$-factor, if $G_0$ has tails matching with those of the baseline measure of the Dirichlet process, which essentially requires $G_0$ to be in the weak support of the process, see Proposition \ref{prop:2} and Proposition \ref{prop:1}.
This requirement allows to extend to a possibly unbounded set of locations the results of Gao and van der Vaart (2016), which take into account only the case of compactly supported mixing distributions.
Taking a frequentist approach, $p_0$ can be estimated by the maximum likelihood still 
at a rate $n^{-3/8}$, up to a logarithmic factor. 
As far as we are aware, the result on the rate of convergence in the Hellinger metric for the maximum likelihood estimator (MLE) of a Laplace convolution mixture is new and is obtained taking the approach proposed by Van de Geer (1996), according to which it is the \vir{dimension} of the class of kernels and the behaviour of $p_0$ near zero that determine the rate of convergence for the MLE. As previously mentioned, results on the estimation of $p_0$ are interesting in view of 
the fact that, appealing to an inversion inequality translating the Hellinger or the $L^2$-distance between kernel mixtures, with Fourier transform of the kernel density having polynomially decaying tails, into any $L^p$-Wasserstein distance, $p\geq1$, between the corresponding mixing distributions, rates of convergence in the $L^1$-Wasserstein metric for the MLE and the Bayes' estimator of the mixing distribution can be assessed. 
Merging in the $L^1$-Wasserstein metric between Bayes and maximum likelihood for deconvolving Laplace mixtures follows as a by-product.

\medskip


\noindent \emph{Organization}.
The note is organized as follows. Convergence rates in the Hellinger metric for Bayes and maximum likelihood density estimation of Laplace convolution mixtures are preliminarily studied in Sect. \ref {sec:Bayes} and in Sect. \ref {sec:MLE}, respectively, in view of their subsequent instrumental use for assessing the $L^1$-Wasserstein accuracy of the two estimation procedures in recovering the mixing distribution of the sampling density. Merging between Bayes and maximum likelihood follows, as shown in Sect. \ref {sec:merging}. Remarks and suggestions for possible refinements and extensions of the exposed results are presented in Sect. \ref{sec:finrmks}. Auxiliary lemmas, along with the proofs of the main results, are deferred to Appendices A--D.

\medskip

\noindent \emph{Notation}. We fix the notation and recall some 
definitions used throughout.\\[7pt]
{\textsf{Calculus}}
\begin{itemize}
\item[--] The symbols \vir{$\lesssim$} and
\vir{$\gtrsim$} indicate inequalities valid up to a constant multiple that is universal or fixed within the context, but anyway
inessential for our purposes.
\item[--] For sequences of real numbers $(a_n)_{n\in\mathbb{N}}$ and $(b_n)_{n\in\mathbb{N}}$, the notation $a_n\sim b_n$ means that $(a_n/b_n)\rightarrow 1$ as $n\rightarrow+\infty$.
Analogously, for real-valued functions $f$ and $g$, 
the notation $f\sim g$ means that $f/g\rightarrow1$ 
in an asymptotic regime that is 
clear
from the context. 
\end{itemize}
{\textsf{Covering and entropy numbers}} 
\begin{itemize}
\item[--] Let $(T,\,d)$ be a (subset of a) semi-metric space. For every $\varepsilon>0$, the $\varepsilon$-\emph{covering number} of $(T,\,d)$, denoted by $N(\varepsilon,\,T,\,d)$, is defined as the minimum number of $d$-balls of radius $\varepsilon$ needed to cover $T$. 
Take $N(\varepsilon,\,T,\,d)=+\infty$ if no finite covering by $d$-balls of radius $\varepsilon$ exists. The logarithm of the $\varepsilon$-covering number, $\log N(\varepsilon,\,T,\,d)$,
is called the $\varepsilon$-\emph{entropy}.
\smallskip
\item[--] Let $(T,\,d)$ be a (subset of a) semi-metric space. For every $\varepsilon>0$, the $\varepsilon$-\emph{packing number} of $(T,\,d)$, denoted by $D(\varepsilon,\,T,\,d)$, is defined as the maximum number of points in $T$ such that the distance between each pair is at least $\varepsilon$. Take $D(\varepsilon,\,T,\,d)=+\infty$ if no such finite $\varepsilon$-packing exists. The logarithm of the $\varepsilon$-packing number, $\log D(\varepsilon,\,T,\,d)$, 
is called the $\varepsilon$-\emph{entropy}. 
\vspace*{-0.05cm}
\end{itemize}
Covering and packing numbers are related by the inequalities
$$N(\varepsilon,\,T,\,d)\leq D(\varepsilon,\,T,\,d)\leq N(\varepsilon/2,\,T,\,d).$$
{\textsf{Function spaces and probability 
}}
\begin{itemize}\item[--] For real number $1\leq p <+\infty$, let 
$$L^p(\mathbb{R}):=\{f|\,f:\mathbb{R}\rightarrow\mathbb{C},\,f \mbox{ is Borel measurable, }\int|f|^p\,\d\lambda<+\infty\}.$$
For $f\in L^p(\mathbb{R})$, the $L^p$-norm of $f$ is defined as 
$||f||_p:=(\int|f|^p\,\d\lambda)^{1/p}$. The supremum norm of a function $f$ is defined as $||f||_\infty:=\sup_{x\in\mathbb{R}}|f(x)|$.
\item[--] For $f\in L^1(\mathbb{R})$, the complex-valued function $\hat f(t):=\int_{-\infty}^{+\infty} e^{itx}f(x)\,\d x$, $t\in\mathbb{R}$, is called the \emph{Fourier transform of $f$.} 
\item[--] All probability density functions are meant to be with respect to Lebesgue measure $\lambda$
on $\mathbb{R}$ or on some subset thereof. 
\item[--] The same symbol, $G$ (say), is used to denote a probability measure on a Borel-measurable space $(\mathscr Y,\,\mathscr B(\mathscr Y))$
and the corresponding cumulative distribution function (c.d.f.).
\item[--] The degenerate probability distribution putting mass one at a point $\theta\in\mathbb{R}$ is denoted by $\delta_\theta$.
\item[--] The notation $Pf$ abbreviates the expected value $\int f\,\d P$, where the integral is understood to extend over the entire natural domain when, here and elsewhere, the domain of integration is omitted.
\item[--] Given a r.v. 
$Y$ with distribution $G$, the \emph{moment generating function} of $Y$ or the \emph{Laplace transform of the probability measure $G$} is defined as
$$M_G(s):=E[e^{sY}]=\int_{\mathscr{Y}}e^{sy}\,\d G(y) \,\,\,\mbox{ for all $s$ for which the integral is finite.}$$
\end{itemize}
{\textsf{Metrics and divergences}} 
\begin{itemize}
\item[--] The \emph{Hellinger distance} between any pair of probability density functions $q_1$ and $q_2$ on $\mathbb{R}$ is defined as $h(q_1,\,q_2):=\{\int(q_1^{1/2}-q_2^{1/2})^2\,\d \lambda\}^{1/2}$, the $L^2$-distance between the square-root densities. The following inequalities, due to LeCam (1973), p. 40, relating the $L^1$-norm and the Hellinger distance hold:
\begin{equation}\label{eq:Hel_L1}
h^2(q_1,\,q_2)\leq||q_1-q_2||_1
\end{equation} 
and 
\begin{equation}\label{eq:L1_Hel}
||q_1-q_2||_1\leq 2 h(q_1,\,q_2).
\end{equation}
\item[--] For ease of notation, the same symbol $d$ is used throughout to denote the $L^1$-norm, the $L^2$-norm or the Hellinger metric, the intended meaning being declared at each occurrence.
\item[--] For any probability measure $Q$ on $(\mathbb{R},\,\mathscr{B}({\mathbb{R}}))$ with density $q$, 
let 
\begin{align*}\textrm{KL}(P_0\|Q):=
\left\{
\begin{array}{ll}
\displaystyle\int \log\frac{\d P_0}{\d Q}\,\d P_0=\int_{p_0q>0} p_0\log\frac{p_0}{q}\,\d\lambda, & \mbox{\quad if $P_0\ll Q$,}\\[10pt]
\quad +\infty, &\mbox{\quad otherwise,}
\end{array}\right.
\end{align*}
be the \emph{Kullback-Leibler divergence} of $Q$ from $P_0$ and, for $k\geq2$, let
\begin{align*}\textrm{V}_k(P_0\|Q):=
\left\{
\begin{array}{ll}
\displaystyle
\int \bigg|\log\frac{\d P_0}{\d Q}\bigg|^k\,\d P_0=
\int_{p_0q>0} p_0\bigg|\log\frac{p_0}{q}\bigg|^k\,\d \lambda, & \mbox{\quad if $P_0\ll Q$,}\\[10pt]
\quad +\infty, &\mbox{\quad otherwise,}
\end{array}\right.
\end{align*}
be the $k$th absolute moment of $\log(\d P_0/\d Q)$. 
For any $\varepsilon>0$ and a given $k\geq2$, define a Kullback-Leibler type neighborhood of $P_0$ as 
$$B_{\mathrm{KL}}(P_0;\,\varepsilon^k):=\{Q:\,\textrm{KL}(P_0\|Q)\leq\varepsilon^2,\,\textrm{V}_k(P_0\|Q)\leq\varepsilon^k\}.$$
\item[--] For any real number $p\geq 1$ and any pair of probability measures $G_1,\,G_2\in\mathscr G$ with finite $p$th absolute moments,
the $L^p$-\emph{Wasserstein distance} between $G_1$ and $G_2$ is defined as
\[W_p(G_1,\,G_2):=\pt{\inf_{\gamma\in\Gamma(G_1,\,G_2)}\int_{\mathscr Y\times \mathscr Y}|y_1-y_2|^p\,
 \gamma(\d y_1,\,\d y_2)}^{1/p},\]
where $\Gamma(G_1,\,G_2)$ is the set of all joint probability measures on $(\mathscr Y\times \mathscr Y)\subseteq\mathbb{R}^2$,
with marginals $G_1$ and $G_2$ on the first and second arguments, respectively. 
\end{itemize}
{\textsf{Stochastic order symbols}}\\[4pt]
Let $(Z_n)_{n\in\mathbb{N}}$ be a sequence of real-valued random variables, possibly defined on entirely different probability spaces $(\Omega_n,\,\mathscr F_n,\,\mathbf{P}_n)_{n\in\mathbb{N}}$. Suppressing $n$ in $\mathbf{P}$ causes no confusion if it is understood that $\mathbf{P}$ refers to whatever probability space $Z_n$ is defined on. Let $(k_n)_{n\in\mathbb{N}}$ be a sequence of positive real numbers. We write
\begin{itemize}
\item $Z_n=O_{\mathbf{P}}(k_n)$ if
$\lim_{T\rightarrow+\infty}\limsup_{n\rightarrow+\infty}\mathbf{P}(|Z_n|>Tk_n)=0$. Then,
$Z_n/k_n=O_{\mathbf{P}}(1)$,
\item $Z_n=o_{\mathbf{P}}(k_n)$ if, for every $\varepsilon>0$, $\lim_{n\rightarrow +\infty}\mathbf{P}(|Z_n|>\varepsilon k_n)=0$. Then, $Z_n/k_n=o_{\mathbf{P}}(1)$.
\end{itemize}
Unless otherwise specified, in all stochastic order symbols used throughout, the probability measure $\mathbf{P}$ is understood to be $P_0^n$, the joint law of the first $n$ coordinate projections of the infinite product probability measure $P_0^{\mathbb{N}}$.


\section{Rates of convergence for $L^1$-Wasserstein deconvolution of Dirichlet-Laplace mixtures}\label{sec:Bayes}
In this section, we present some results on the Bayesian recovery of a distribution function from data contaminated with an additive random error following the standard Laplace distribution: we derive rates of convergence 
for the $L^1$-Wasserstein deconvolution of Dirichlet-Laplace mixture densities. The density is modeled as a Dirichlet-Laplace mixture
$$p_{G}(\cdot)\equiv (G \ast f)(\cdot)=\int_{\mathscr Y} f(\cdot-y)\,\d G(y),$$ 
with the kernel density $f$ being the standard Laplace  
and the mixing distribution 
$G$ being any probability measure on $(\mathscr Y,\,\mathscr{B}(\mathscr{Y}))$, with 
$\mathscr Y\subseteq \mathbb{R}$.
As a prior for $G$, we consider
a Dirichlet process with base measure $\alpha$ on 
$(\mathscr{Y},\,\mathscr{B}(\mathscr{Y}))$, denoted by $\mathscr{D}_{\alpha}$.
We recall that a Dirichlet process
on a measurable space $(\mathscr{Y},\,\mathscr{B}(\mathscr Y))$,
with finite and positive base measure $\alpha$ on $(\mathscr{Y},\,\mathscr{B}(\mathscr Y))$,
is a random probability measure
$\tilde G$ on $(\mathscr{Y},\,\mathscr{B}(\mathscr Y))$ such that, for every finite partition
$(B_1,\,\ldots,\,B_k)$ of $\mathscr{Y}$, $k\geq1$, the vector of random
probabilities $(\tilde G(B_1),\,\ldots,\,\tilde G(B_k))$ has Dirichlet
distribution with
parameters $(\alpha(B_1),\,\ldots,\,\alpha(B_k))$. A Dirichlet process mixture of Laplace densities can be structurally described as follows:
\vspace{-0.2cm}
\begin{description}
\item[\,\,\,$\bullet$]
\,\,\, $\tilde G\sim\mathscr{D}_{\alpha}$, \vspace*{-0.15cm}
\item[\,\,\,$\bullet$]
\,\,\,  given $\tilde G=G$, the r.v.'s $Y_1,\,\ldots,\,Y_n$ are i.i.d. according to $G$, \vspace*{-0.15cm}
\item[\,\,\,$\bullet$]
\,\,\,  given $(G,\,Y_1,\,\ldots,\,Y_n)$, the
r.v.'s $Z_1,\,\ldots,\,Z_n$ are i.i.d. according to $f$, \vspace*{-0.15cm}
\item[\,\,\,$\bullet$]
\,\,\,
 sampled values from $p_G$ are defined as
$X_i:=Y_i+Z_i$ for $i=1,\,\ldots,\,n$.
\end{description}

Let the sampling density $p_0$ be itself a Laplace mixture with mixing distribution $G_0$, that is, $p_0\equiv p_{G_0}=G_0\ast f$. In order to assess the rate of convergence in the $L^1$-Wasserstein metric for the Bayes' estimator of the true mixing distribution $G_0$, we appeal to an inversion inequality relating the $L^2$-norm or the Hellinger distance between Laplace 
mixed densities 
to any $L^p$-Wasserstein distance, $p\geq1$, between the corresponding mixing distributions, see Lemma \ref{lem:2} in Appendix D. 
Therefore, we first derive rates of contraction in the $L^2$-norm and the Hellinger metric for the posterior distribution of a Dirichlet-Laplace mixture prior: convergence of the posterior distribution at
a rate $\varepsilon_n$, in fact, implies the existence of Bayes' point estimators that converge at least as fast as $\varepsilon_n$ in the frequentist sense. The same indirect approach has been 
taken by Gao and van der Vaart (2016), who deal with the case of compactly supported mixing distributions, while we extend the results to mixing distributions possibly supported on the whole real line or on some unbounded subset thereof. We present two results on posterior contraction rates for a Dirichlet-Laplace mixture prior. The first one, as stated in Proposition \ref{prop:2}, is relative to the $L^1$-norm or the Hellinger metric; the second one, 
as stated in Proposition \ref{prop:1}, is relative to the $L^2$-metric. Proofs are deferred to Appendix C.

\begin{proposition}\label{prop:2}
Let $X_1,\,\ldots,\,X_n$ be i.i.d. observations from a density $p_0\equiv p_{G_0}=G_0\ast f$,
with the kernel density $f$ being the standard Laplace and the mixing distribution $G_0$ such that, for some decreasing function $A_0:\,(0,\,+\infty)\rightarrow [0,\,1]$ and a constant $0<c_0<+\infty$,
\begin{equation}\label{eq:tailG0SS}
G_0([-T,\,T]^c)\leq A_0(T)\lesssim \exp{(-c_0T)}\quad\mbox{for large $T>0$}.
\end{equation}
If the baseline measure $\alpha$ of the Dirichlet process is symmetric around zero
and possesses density $\alpha'$ such that, for some constants $0<b<+\infty$ and $0<\tau\leq 1$,
\begin{equation}\label{eq:tailG1}
\alpha'(y)\propto \exp{(-b|y|^\tau)},\quad y\in\mathbb{R},
\end{equation}
then there exists a sufficiently large constant $M>0$ such that
$$\Pi(d(p_G,\,p_0)\geq M n^{-3/8}\log^{5/8}n
\mid \Data)=o_{\mathbf{P}}(1),$$
where $\Pi(\cdot\mid \Data)$ denotes the posterior distribution corresponding 
to a Dirichlet-Laplace process mixture prior after $n$ observations
and $d$ can be either the Hellinger or the $L^1$-metric.
\end{proposition}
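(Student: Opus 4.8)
The plan is to apply the standard theory of posterior contraction rates (Ghosal--Ghosh--van der Vaart, as refined by Ghosal--van der Vaart for the non-i.i.d.\ and sieve settings) for the Hellinger/$L^1$ metric, reducing everything to three conditions: a prior mass condition on a Kullback--Leibler neighborhood $B_{\mathrm{KL}}(P_0;\,\varepsilon_n^k)$, an entropy (or equivalently a sieve-plus-metric-entropy) bound, and a complementary sieve probability bound. With $\varepsilon_n = n^{-3/8}\log^{5/8}n$ the target, the crux is that the Laplace kernel is \emph{ordinary smooth}: its Fourier transform $\hat f(t) = 1/(1+t^2)$ decays only polynomially, which is what produces the sub-parabolic $n^{-3/8}$ rate rather than the usual $n^{-1/2}$-type rate, and which also drives the power $5/8$ of the logarithm.

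First I would construct the sieve. For positive sequences $a_n\uparrow\infty$ and $H_n\in\mathbb{N}$ to be tuned, let $\mathscr{G}_n$ be the set of mixing distributions supported in $[-a_n,a_n]$ with at most $H_n$ support points (or, more conveniently for the entropy count, discrete distributions on a fine grid of $[-a_n,a_n]$), and let $\mathscr{P}_n = \{p_G : G\in\mathscr{G}_n\}$. The complement probability $\Pi(\mathscr{P}_n^c\mid \text{-})$ is controlled by the Dirichlet process prior: the tail bound for $\mathscr{D}_\alpha$ gives $\mathscr{D}_\alpha(G([-a_n,a_n]^c)>\varepsilon_n)$ exponentially small once $a_n$ grows like a suitable power of $\log n$, using that $\alpha'$ has stretched-exponential tails $\exp(-b|y|^\tau)$ from \eqref{eq:tailG1}; the ``number of atoms'' part is handled by the usual stick-breaking/Dirichlet argument (the number of atoms exceeding $H_n$ has probability decaying like $e^{-cH_n\log H_n}$ or similar). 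For the metric entropy, I would use that $\|p_{G_1}-p_{G_2}\|_1 \le$ (something like) $W_1(G_1,G_2)$ or at least a Lipschitz-type bound of $h(p_{G_1},p_{G_2})$ in an appropriate distance on $G$, together with the fact that Laplace mixtures on a bounded interval with few atoms form a low-dimensional family; this yields $\log N(\varepsilon_n,\mathscr{P}_n,d) \lesssim H_n \log(a_n/\varepsilon_n) \lesssim n\varepsilon_n^2$ for the chosen $H_n\asymp n\varepsilon_n^2/\log n \asymp n^{1/4}\log^{1/4}n$ and $a_n\asymp (\log n)^{1/\tau}$ or so.

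The prior mass condition is where the polynomial decay of $\hat f$ enters decisively and is the main obstacle. I need $\Pi\big(B_{\mathrm{KL}}(P_0;\,\varepsilon_n^k)\big)\ge e^{-c n\varepsilon_n^2}$. The approach is to first approximate $G_0$ (which by \eqref{eq:tailG0SS} has exponential tails, hence is well-approximated after truncation to $[-T_n,T_n]$ with $T_n\asymp\log n$) by a finitely-supported $G_0^*$ with $N\asymp \log n / \text{(grid spacing)}$ atoms such that $p_{G_0^*}$ is close to $p_0$ in the relevant divergences; the rate at which a discrete $N$-atom mixture can approximate $p_0$ in $\|\cdot\|_\infty$ or Hellinger, when the kernel Fourier transform decays only like $|t|^{-2}$, is polynomial in $1/N$ rather than exponential, and this is precisely the bottleneck that forces $\varepsilon_n$ down to $n^{-3/8}$. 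Then one lower-bounds the Dirichlet prior probability of a neighborhood of $G_0^*$ in a weak-type metric — a standard computation giving $\exp(-c N\log(1/\varepsilon_n))$ provided $\alpha$ puts enough mass near each atom, which holds because $G_0$ has tails matching those of $\alpha$ (the hypothesis $c_0$ in \eqref{eq:tailG0SS} versus the $\exp(-b|y|^\tau)$ tails of $\alpha'$). Converting a weak neighborhood of $G_0^*$ into a KL neighborhood of $P_0$ uses the boundedness below of $p_0$ on compacta (Laplace mixtures are bounded away from $0$ on any compact set) plus the exponential tails of $p_0$ to control $\log(p_0/p_G)$ and its $k$th moment; here one must be slightly careful to truncate and handle the tail contribution to $\mathrm{KL}$ and $\mathrm{V}_k$, using \eqref{eq:tailG0SS} again. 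Matching $N\log(1/\varepsilon_n)\asymp n\varepsilon_n^2$ with the approximation error $\asymp$ (polynomial in $1/N$) $\lesssim \varepsilon_n$ pins down $\varepsilon_n\asymp n^{-3/8}$ up to the logarithmic factor, whose exponent $5/8$ comes from bookkeeping the several $\log n$ factors (truncation level, grid spacing, entropy, prior mass).

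Finally, I would invoke the general posterior contraction theorem: the three conditions above — prior mass $\ge e^{-cn\varepsilon_n^2}$, sieve entropy $\lesssim n\varepsilon_n^2$, and sieve complement prior mass $= o(e^{-(c+4)n\varepsilon_n^2})$ (with the extra constant to absorb the denominator) — together yield $\Pi(d(p_G,p_0)\ge M\varepsilon_n\mid X^{(n)}) = o_{\mathbf{P}}(1)$ for large $M$, first for $d = h$ and then for $d = \|\cdot\|_1$ via \eqref{eq:Hel_L1}--\eqref{eq:L1_Hel}. The genuinely delicate points, which I expect to consume most of the work, are (i) the approximation-theoretic lemma quantifying how well an $N$-atom Laplace mixture approximates $p_0$ — this is the ``ordinary smooth'' penalty — and (ii) making the tail arguments (for both $G_0$ and the induced $p_0$) quantitative enough that the truncation levels $T_n\asymp\log n$ feed correctly into both the prior-mass lower bound and the KL/$\mathrm{V}_k$ control, since these are what produce the precise power of $\log n$.
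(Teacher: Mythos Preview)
Your overall template (Ghosal--Ghosh--van der Vaart: prior mass, entropy, sieve complement) and your treatment of the Kullback--Leibler prior mass condition match the paper's strategy closely, including the discrete approximation of $G_0$ by $N$ atoms and the lower bound $\exp(-cN\log(1/\varepsilon))$ on the Dirichlet prior mass; the paper makes this precise by invoking Lemma~2 of Gao and van der Vaart (2016), which gives exactly $N\asymp\varepsilon^{-2/3}$ for Laplace mixtures and is the ``polynomial in $1/N$'' approximation you allude to.

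Where you diverge from the paper is in the sieve and the entropy bound. You propose a sieve of mixing distributions with at most $H_n$ atoms supported in $[-a_n,a_n]$, then count parameters to get entropy $\lesssim H_n\log(a_n/\varepsilon_n)$ and control the sieve complement via stick-breaking tail estimates. This can be made to work (it is essentially the Ghosal--van der Vaart 2007 normal-mixture recipe), but note that as literally stated your sieve has Dirichlet prior mass zero, since $\mathscr{D}_\alpha$-samples have infinitely many atoms almost surely; you would need to replace ``at most $H_n$ support points'' by ``all but $\varepsilon_n$ of the mass on at most $H_n$ points in $[-a_n,a_n]$'', and then bound the complement via the stick-breaking remainder. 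The paper takes a shorter route: the sieve is simply $\{p_G:\,G([-a,a]^c)\le A_\eta(a)\text{ for all }a>0\}$ for a suitable function $A_\eta$, with \emph{no} restriction on the number of atoms. By the Doss--Sellke (1982) result on tails of Dirichlet samples, this sieve has prior mass equal to one, so the complement condition is trivial. The entropy of this sieve is then bounded directly by a Laplace-specific lemma (the paper's Lemma~\ref{lem:entropy}, extending Proposition~2 of Gao--van der Vaart) that gives $\log N(\varepsilon,\mathscr{P}_A,d)\lesssim\varepsilon^{-2/3}\log(A^{-1}(\varepsilon)/\varepsilon^2)$ via the Ball--Pajor convex-hull entropy bound, rather than by parameter counting. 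This buys the paper a cleaner argument with fewer moving parts (no $H_n$ to tune, no stick-breaking tail estimate), at the cost of relying on the convex-hull entropy machinery specific to the kernel class.
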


\begin{remark}
In virtue of the following inequality,
$$\forall\,G,\,G'\in\mathscr G,\,\,\, ||p_G-p_{G'}||_2^2\leq 4||f||_\infty h^2(p_G,\,p_{G'}),$$
where $||f||_\infty=1/2$ for the standard Laplace kernel density, 
see \eqref{eq:hel^2} in Lemma \ref{lem:l2hel},
the $L^2$-metric posterior contraction rate for a Dirichlet-Laplace mixture prior could, in principle, be
derived from Proposition \ref{prop:2}, which relies on Theorem 2.1 of Ghosal \emph{et al}.
(2000), p. 503, or Theorem 2.1 of Ghosal and van der Vaart (2001), p. 1239,
but this would impose slightly stronger conditions on 
the density $\alpha'$ of the baseline measure than those required in Proposition \ref{prop:1} below,
which is based on Theorem 3 of Gin\'{e} and Nickl (2011), p. 2892, that is
tailored for assessing posterior contraction rates in $L^r$-metrics, $1< r< +\infty$, taking an approach
that can only be used if one has sufficiently fine control of the approximation properties of the prior support 
in the $L^r$-metric considered.
\end{remark}

\begin{proposition}\label{prop:1}
Let $X_1,\,\ldots,\,X_n$ be i.i.d. observations from a density
$p_0\equiv p_{G_0}=G_0\ast f$, with the kernel density $f$ being the standard Laplace 
and the mixing distribution $G_0$ such that condition \eqref{eq:tailG0SS} holds as in Proposition \ref{prop:2}. If the baseline measure $\alpha$ of the Dirichlet process possesses continuous and positive 
density $\alpha'$ such that, for some constants $0<b<+\infty$ and $0<\tau\leq1$,
\begin{equation}\label{eq:tailG11}
\alpha'(y)\gtrsim \exp{(-b|y|^\tau)}\quad\mbox{for large $|y|$},
\end{equation}
then there exists a sufficiently large constant $M>0$ such that
\begin{equation}\label{eq:l2norm}
\Pi(||p_G-p_0||_2\geq M n^{-3/8}\log^{5/8}n
\mid \Data)=o_{\mathbf{P}}(1),
\end{equation}
where $\Pi(\cdot\mid \Data)$ denotes the posterior distribution corresponding 
to a Dirichlet-Laplace process mixture prior after $n$ observations.
\end{proposition}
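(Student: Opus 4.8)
\medskip\noindent\emph{Proof strategy (Proposition~\ref{prop:1}).}
The plan is to deduce the assertion from Theorem~3 of Gin\'{e} and Nickl (2011) on posterior contraction rates in $L^r$-metrics, taking $r=2$ and the rate $\varepsilon_n=n^{-3/8}\log^{5/8}n$ already appearing in Proposition~\ref{prop:2}. In the form we shall use it, that theorem requires, besides the usual prior-mass condition on Kullback--Leibler neighbourhoods, a sieve $\mathscr P_n\subseteq\mathscr P$ whose prior mass complement is exponentially small and whose $\varepsilon_n$-entropy \emph{in the $L^2$-metric} is at most a multiple of $n\varepsilon_n^2$, together with an $L^2$-approximant of $p_0$ inside those same Kullback--Leibler neighbourhoods. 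Concretely, the three things to check are: (i) $\Pi\big(B_{\mathrm{KL}}(P_0;\,\varepsilon_n^k)\big)\gtrsim\exp(-c\,n\varepsilon_n^2)$ for some $k\geq2$ and $c>0$; (ii) $\Pi(p_G\notin\mathscr P_n)\lesssim\exp(-(c+4)n\varepsilon_n^2)$; and (iii) $\log N(\varepsilon_n,\,\mathscr P_n,\,||\cdot||_2)\lesssim n\varepsilon_n^2$, with $\mathscr P_n$ containing an $L^2$-approximant of $p_0$ drawn from $B_{\mathrm{KL}}(P_0;\,\varepsilon_n^k)$.

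For (i) I would reuse the prior-mass argument underlying Proposition~\ref{prop:2}: by the tail bound \eqref{eq:tailG0SS}, replace $G_0$ by a finitely supported measure $G_0^\ast$ with $O(\log n)$ atoms inside an interval $[-a_n,\,a_n]$, $a_n\asymp\log n$, such that $h(p_{G_0^\ast},\,p_{G_0})\lesssim\varepsilon_n$ and hence, by \eqref{eq:hel^2} in Lemma~\ref{lem:l2hel}, also $||p_{G_0^\ast}-p_{G_0}||_2\lesssim\varepsilon_n$; then bound $\textrm{KL}(P_0\|P_G)$ and $\textrm{V}_k(P_0\|P_G)$ for $G$ in a weak neighbourhood of $G_0^\ast$ using the pointwise lower bounds available for Laplace mixtures (the kernel $f$ being everywhere positive, and $p_0$ having exponential tails by \eqref{eq:tailG0SS}). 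The Dirichlet prior assigns the required lower-bounded mass to such a neighbourhood because, by \eqref{eq:tailG11}, $\alpha'$ is bounded below near the relevant atoms; this is exactly where only a \emph{one-sided} hypothesis on the baseline density enters, which is what lets Proposition~\ref{prop:1} dispense with the matching-tail requirement imposed in Proposition~\ref{prop:2}.

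For (ii) and (iii) I would take $\mathscr P_n=\{p_G:\ G([-a_n,\,a_n]^c)\leq\rho_n\}$ for a suitably small sequence $\rho_n$ and the same $a_n\asymp\log n$, bound $\Pi(p_G\notin\mathscr P_n)$ through Markov's inequality and the identity $E[\tilde G([-a_n,a_n]^c)]=\alpha([-a_n,a_n]^c)/\alpha(\mathbb R)$ together with a Dirichlet-process concentration bound, and then estimate the $L^2$-entropy of $\mathscr P_n$ by approximating, for every $p_G\in\mathscr P_n$, the restriction of $G$ to $[-a_n,a_n]$ by a discrete measure carrying $N_n$ atoms and matching sufficiently many of its moments. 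The decisive ingredient here is the polynomial decay of the Fourier transform $\hat f(t)=(1+t^2)^{-1}$ of the Laplace kernel: via Parseval's identity it turns moment matching into an $L^2$-bound on $p_G-p_{G'}$, so that $N_n$ can be taken of order $a_n\varepsilon_n^{-2/3}$, up to logarithmic factors. A minimal net over the locations and weights of the resulting $N_n$-atom mixing measures then yields $\log N(\varepsilon_n,\,\mathscr P_n,\,||\cdot||_2)\lesssim n\varepsilon_n^2$, as required. This is the \vir{fine control of the approximation properties in the $L^2$-metric} alluded to in the Remark, and I would isolate it as an auxiliary lemma.

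The main obstacle is step (iii): producing a quantitatively sharp $L^2$-approximation of compactly-supported-mixing Laplace mixtures by finitely supported ones — getting the exponent of $\varepsilon_n^{-1}$ and, above all, the logarithmic powers exactly right so that the entropy reads $\lesssim n\varepsilon_n^2$ at the rate $n^{-3/8}\log^{5/8}n$ — while simultaneously letting the truncation level $a_n$ grow, as forced by the merely exponential (rather than compact) support of $G_0$; this coupling of a growing sieve with the $L^2$-entropy bookkeeping is the genuine novelty beyond Gao and van der Vaart (2016). A subsidiary difficulty is the verification of the Kullback--Leibler and $\textrm{V}_k$ estimates in (i) over an unbounded set of locations, which requires uniform lower bounds on $p_0$ of the order $e^{-|x|}$ away from the origin.
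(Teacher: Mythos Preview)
Your plan misreads the structure of Theorem~3 in Gin\'e and Nickl (2011). That theorem does \emph{not} ask for an $L^2$-entropy bound on a freely chosen sieve; instead, the sieve is prescribed by display~(15) of their Theorem~2 as
\[
\mathscr P_n=\bigl\{p_G\in\mathscr P:\ \|p_G\ast K_{2^{-J_n}}-p_G\|_2\leq C\delta_n\bigr\},
\]
a \emph{bias} condition coming from kernel density estimation. The decisive observation in the paper (its Lemma~\ref{lem:1}) is that when $\hat f$ decays polynomially of degree $\beta>1/2$, one has $\|p_G\ast K_h-p_G\|_2\leq(2\pi)^{-1/2}\|\hat f-\hat f\hat K_h\|_2=O(h^{\beta-1/2})$ \emph{uniformly in $G\in\mathscr G$}. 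For the Laplace kernel ($\beta=2$) and $2^{-J_n}\asymp(n\varepsilon_n^2)^{-1}$ this yields $\mathscr P_n=\mathscr P$, so $\Pi(\mathscr P\setminus\mathscr P_n)=0$ and neither an entropy computation nor a remaining-mass estimate is required. The additional condition~(3) of Gin\'e--Nickl, namely $\Pi(\|p_G\|_\infty>B\mid X^{(n)})=o_{\mathbf P}(1)$, is also free here since $\|p_G\|_\infty\leq\|f\|_\infty=1/2$ for every $G$. Thus the only substantive verification is the Kullback--Leibler small-ball condition~(2), which indeed uses only the lower bound \eqref{eq:tailG11} on $\alpha'$, via the discrete approximation of Gao and van~der~Vaart (2016); this part of your outline is correct.

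Your alternative sieve $\{p_G:\,G([-a_n,a_n]^c)\leq\rho_n\}$ is not just a different route, it conflicts with the hypotheses of the proposition. To make $\Pi(p_G\notin\mathscr P_n)$ exponentially small you invoke $E[\tilde G([-a_n,a_n]^c)]=\bar\alpha([-a_n,a_n]^c)$, but controlling this (and, more seriously, getting the exponential concentration you would need) requires an \emph{upper} bound on the tails of $\alpha'$, which Proposition~\ref{prop:1} deliberately does not assume; see the Remark preceding it. So your step~(ii) fails under the stated assumptions, and your ``main obstacle'' (iii) is in fact a step that the paper's argument never needs.
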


As previously mentioned, convergence of the posterior distribution at
a rate $\varepsilon_n$ implies the existence of point estimators that converge at least as fast as $\varepsilon_n$ in the frequentist sense, see, for instance, Theorem 2.5 in Ghosal \emph{et al}.
(2000), p. 506, for the construction of a point estimator that applies to general statistical models and posterior distributions. The posterior expectation of the density $p_G$, which we refer to as the Bayes' density estimator,
$$\hat p_n^{\textrm{B}}(\cdot):=
\int_{\mathscr G} p_G(\cdot)\Pi(\d
 G\mid \Data),
$$
has a similar property when jointly considered with bounded semi-metrics that are convex or whose square is convex in one argument. When the random mixing distribution $\tilde G$ is distributed according to a Dirichlet process, 
the expression of the Bayes' density estimator $\hat p_n^{\textrm{B}}$
is given by formula (2.6) of Lo (1984), p. 353,
replacing $K(\cdot,\,u)$ with 
$\frac{1}{2}\exp{\{-|\cdot-u|\}}$ at each occurrence.

\begin{corollary}\label{cor:1}
Suppose that condition \eqref{eq:tailG0SS} holds for some decreasing function $A_0:\,(0,\,+\infty)\rightarrow [0,\,1]$ and a finite constant $c_0>(1/e)$ such that
\begin{equation}\label{eq:44}
G_0([-T,\,T]^c)\leq A_0(T)\lesssim \exp{(-e^{c_0T})}\quad\mbox{for large $T>0$}
\end{equation}
and condition \eqref{eq:tailG1} holds as in Proposition \ref{prop:2}. 
Then,
$$d(\hat p^{\mathrm{B}}_n,\,p_0)=O_{\mathbf{P}}(n^{-3/8}\log^{1/2} n),$$
for $d$ being either the Hellinger or the $L^1$-metric.
 \end{corollary}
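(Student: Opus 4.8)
The plan is to transfer the posterior contraction rate of Proposition~\ref{prop:2} onto the posterior mean $\hat p_n^{\mathrm{B}}$ by a convexity (Jensen) argument, exploiting that $h$ and $\|\cdot\|_1$ are \emph{bounded} on pairs of probability densities and that the proof of Proposition~\ref{prop:2}, being a verification of the hypotheses of the general contraction theorems of Ghosal \emph{et al}.\ (2000) and Ghosal and van der Vaart (2001), actually delivers the \emph{quantitative} bound $\Pi(d(p_G,p_0)>M\varepsilon_n\mid\Data)=O_{\mathbf P}(e^{-Kn\varepsilon_n^2})$, not merely the $o_{\mathbf P}(1)$ statement recorded there.

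First I would observe that, since $G\mapsto p_G$ is affine, $\hat p_n^{\mathrm{B}}=\int_{\mathscr G}p_G\,\Pi(\d G\mid\Data)$ is a $\Pi(\cdot\mid\Data)$-average of the densities $p_G$; because $q\mapsto\|q-p_0\|_1$ is convex and $q\mapsto h^2(q,p_0)=2-2\int\sqrt{q p_0}\,\d\lambda$ is convex on the convex set of Lebesgue densities, Jensen's inequality gives $\|\hat p_n^{\mathrm{B}}-p_0\|_1\le\int\|p_G-p_0\|_1\,\Pi(\d G\mid\Data)$ and $h^2(\hat p_n^{\mathrm{B}},p_0)\le\int h^2(p_G,p_0)\,\Pi(\d G\mid\Data)$. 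Splitting each integral according to whether $d(p_G,p_0)\le M\varepsilon_n$ or not, the integrand is at most $M\varepsilon_n$ (resp.\ $M^2\varepsilon_n^2$) on the first event and at most $2$ on its complement, whence pathwise $\|\hat p_n^{\mathrm{B}}-p_0\|_1\lesssim\varepsilon_n+\Pi(\|p_G-p_0\|_1>M\varepsilon_n\mid\Data)$ and $h^2(\hat p_n^{\mathrm{B}},p_0)\lesssim\varepsilon_n^2+\Pi(h(p_G,p_0)>M\varepsilon_n\mid\Data)$. Since $n\varepsilon_n^2\to\infty$ at a polynomial rate, $e^{-Kn\varepsilon_n^2}=o(\varepsilon_n^2)$, so by the quantitative posterior-mass bound both remainders are negligible and $d(\hat p_n^{\mathrm{B}},p_0)=O_{\mathbf P}(\varepsilon_n)$ for $d\in\{h,\|\cdot\|_1\}$.

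It then remains to replace the rate $\varepsilon_n=n^{-3/8}\log^{5/8}n$ of Proposition~\ref{prop:2} by $\varepsilon_n=n^{-3/8}\log^{1/2}n$, which is precisely the role of the strengthened tail condition \eqref{eq:44}. Inspecting the proof of Proposition~\ref{prop:2}, the level $a_n$ at which the mixing distribution is truncated to $[-a_n,a_n]$ is constrained by $A_0(a_n)\lesssim\varepsilon_n$: under the weaker exponential-tail hypothesis \eqref{eq:tailG0SS} this forces $a_n$ of order $\log n$, whereas under the doubly-exponential tail \eqref{eq:44} one may take $a_n$ of order $\log\log n$, the requirement $c_0>1/e$ being the threshold that keeps the truncation error of smaller order than $\varepsilon_n$ at this reduced $a_n$. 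Carrying the smaller $a_n$ through the metric-entropy estimate for the sieve and the prior small-ball (Kullback--Leibler) estimate lowers the exponent of the logarithm in the balance $n\varepsilon_n^2\asymp\log N(\varepsilon_n,\mathcal P_n,d)\vee(-\log\Pi(B_{\mathrm{KL}}(P_0;\varepsilon_n^{k})))$ from $5/8$ to $1/2$, so that $\varepsilon_n=Mn^{-3/8}\log^{1/2}n$; together with the previous paragraph this yields the corollary for both $d=h$ and $d=\|\cdot\|_1$.

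I expect the last step to be the main obstacle: one must redo the entropy and prior-mass bookkeeping of Proposition~\ref{prop:2} with the sharper truncation $a_n\asymp\log\log n$ and check that all three hypotheses of the master theorem still hold at the improved rate --- in particular that a Laplace mixture $p_{G_0}$ with $G_0$ of unbounded support but doubly-exponentially light tails is still approximated, in Hellinger and $L^1$, by finitely supported mixtures with few enough atoms that the sieve entropy stays $\lesssim n\varepsilon_n^2$, and that the Dirichlet prior with baseline \eqref{eq:tailG1} charges the corresponding Kullback--Leibler neighbourhood with the required mass while assigning its complement mass at most $e^{-(K+4)n\varepsilon_n^2}$. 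The Jensen reduction and the control of the posterior tail are, by contrast, routine.
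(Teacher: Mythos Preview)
Your Jensen/convexity reduction in the first two paragraphs is exactly what the paper does. The gap is in the claim that ``the proof of Proposition~\ref{prop:2}, being a verification of the hypotheses of the general contraction theorems\ldots, actually delivers the quantitative bound $\Pi(d(p_G,p_0)>M\varepsilon_n\mid\Data)=O_{\mathbf P}(e^{-Kn\varepsilon_n^2})$''. This is not true: the standard Kullback--Leibler prior-mass condition used in Proposition~\ref{prop:2} involves only $\mathrm{V}_2$, and the evidence lower bound it yields (Lemma~8.1 of Ghosal \emph{et al}.\ 2000) holds only on an event of $P_0^n$-probability $1-O((n\varepsilon_n^2)^{-1})$. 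Hence $P_0^n\Pi(d>M\varepsilon_n\mid\Data)=O((n\varepsilon_n^2)^{-1})$, not $O(e^{-Kn\varepsilon_n^2})$; with $\varepsilon_n=n^{-3/8}\log^{1/2}n$ one has $(n\varepsilon_n^2)^{-1}\asymp n^{-1/4}(\log n)^{-1}$, which is \emph{larger} than $\varepsilon_n^2\asymp n^{-3/4}\log n$, so the remainder in your splitting does not vanish at the needed rate.

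Consequently your diagnosis of the role of \eqref{eq:44} is also off. The entropy rate $\bar\varepsilon_n=n^{-3/8}\log^{3/8}n$ from Proposition~\ref{prop:2} is already better than $\log^{1/2}n$ and needs no sharpening. What the paper does instead is replace the $\mathrm{V}_2$-based prior-mass condition by the \emph{sup-norm} condition $\Pi(h^2(p_G,p_0)\|p_0/p_G\|_\infty\le\tilde\varepsilon_n^2)\gtrsim e^{-Bn\tilde\varepsilon_n^2}$ and invoke a theorem (Scricciolo 2007) that, under this stronger hypothesis, yields genuinely exponential decay $P_0^n\Pi(\cdot)\lesssim e^{-B_1n\tilde\varepsilon_n^2}=o(\varepsilon_n^2)$. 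Condition~\eqref{eq:44} enters precisely here: it guarantees $M_{G_0}(\pm1)<\infty$ and $a_\varepsilon=A_0^{-1}(\varepsilon^2)\lesssim\log\log(1/\varepsilon)$, so that for the approximating $G$ one has $\|p_0/p_G\|_\infty\le e^{a_\varepsilon}[M_{G_0}(-1)+M_{G_0}(1)]\lesssim\log(1/\varepsilon)$; this extra $\log(1/\varepsilon)$ factor is what turns the prior-mass rate into $\tilde\varepsilon_n=\varepsilon\log^{1/2}(1/\varepsilon)=n^{-3/8}\log^{1/2}n$. (The Remark following the corollary notes an alternative route via $\mathrm{V}_k$ with $k\ge 6$.) So the missing idea is not a finer truncation in the entropy bound, but switching to a prior-mass condition strong enough to force exponential posterior concentration.
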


\begin{proof}
In virtue of the inequality in \eqref{eq:L1_Hel}, it suffices to prove the assertion for the Hellinger metric. The proof follows standard arguments as, for instance, in Ghosal \emph{et al}. (2000), pp. 506--507. 
By convexity of $h^2$ in each argument and
Jensen's inequality, for $\varepsilon_n:=\max\{\bar \varepsilon_n,\, \tilde\varepsilon_n\}=n^{-3/8}(\log n)^{(3\vee 4)/8}=n^{-3/8}\log^{1/2} n
$ and a sufficiently large constant $M>0$,
\[\begin{split}
h^2(\hat p_n^{\textrm{B}},\,p_0)&\leq \int_{\mathscr G} h^2(p_G,\,p_0)\Pi(\d G\mid\Data)\\
&=\pt{\int_{h(p_G,\,p_0)<M\varepsilon_n}+\int_{h(p_G,\,p_0)\geq M\varepsilon_n}
} h^2(p_G,\,p_0)\Pi(\d G\mid\Data)\\[5pt]
&
\lesssim M^2\varepsilon_n^2 +  2 \Pi(h(p_G,\,p_0)
\geq M\varepsilon_n\mid \DataXe).
\end{split}\]
It follows that $$P_0^n h^2(\hat p_n^{\textrm{B}},\,p_0)
\lesssim M^2\varepsilon_n^2 +  2 P_0^n\Pi( h(p_G,\,p_0)
\geq M\varepsilon_n\mid \DataXe)\lesssim \varepsilon^2_n+o(\varepsilon_n^2)$$
because we can apply the almost sure version of Theorem 7 in Scricciolo (2007), p. 636 (see also Theorem A.1 in Scricciolo (2006), p. 2918), which, 
under the prior mass condition 
\begin{equation}\label{eq:74}
\Pi(h^2(p_G,\,p_0)\|p_0/p_G\|_\infty\leq \tilde\varepsilon_n
^2)\gtrsim \exp{(-Bn\tilde\varepsilon_n^2)},
\end{equation} 
with $\tilde\varepsilon_n:=n^{-3/8}\log^{1/2} n$ and a constant $0<B<+\infty$,
yields exponentially fast convergence of the posterior distribution since $P_0^n\Pi( h(p_G,\,p_0)
\geq M\varepsilon_n\mid \DataXe)\lesssim \exp{(-B_1n\tilde\varepsilon_n^2)}$ for a suitable 
constant $0<B_1<+\infty$. To verify that condition \eqref{eq:74} is satisfied, we can proceed as in the proof of Proposition \ref{prop:1}: for any $G$ satisfying \eqref{eq:condmixing}, not only is $h(p_G,\,p_0)\lesssim \varepsilon$, but, under assumption \eqref{eq:44} which guarantees that $M_{G_0}(-1)<+\infty$ and $M_{G_0}(1)<+\infty$, it also is
\[||p_0/p_G||_\infty\leq e^{a_\varepsilon}[M_{G_0}(-1)+M_{G_0}(1)]\lesssim\log(1/\varepsilon),\quad\mbox{ for $a_\varepsilon:= A_0^{-1}(\varepsilon^2)\lesssim\log\log(1/\varepsilon)$}.\]
Then, \[\log \Pi(h^2(p_G,\,p_0)\|p_0/p_G\|_\infty\leq \varepsilon^2\log(1/\varepsilon))\gtrsim -\varepsilon^{-2/3}\log(1/\varepsilon).\]
Condition \eqref{eq:74} is thus verified for $\tilde\varepsilon_n:=\varepsilon\log^{1/2}(1/\varepsilon)=n^{-3/8}\log ^{1/2}n$. Conclude that $h(\hat p^{\mathrm{B}}_n,\,p_0)=O_{\mathbf{P}}(\varepsilon_n)$.
\qed
\end{proof}

\begin{remark}
Admittedly, condition \eqref{eq:44} imposes a stringent constraint on the tail decay rate of $G_0$. An alternative sufficient condition for concluding that
\begin{equation}\label{eq:110}
P_0^n\Pi( d(p_G,\,p_0)
\geq M\varepsilon_n\mid \DataXe)=o(\varepsilon_n^2),\quad \mbox{ for \,$d=h$\, or \,$d=\|\cdot\|_1$,}
\end{equation}
is a prior mass condition involving the $k$th absolute moment of $\log(p_0/p_G)$ for a suitable value of $k$, in place of the sup-norm $\|p_0/p_G\|_\infty$, 
which can possibly induce a lighter condition on $G_0$. 
For $\tilde\varepsilon_n:=n^{-3/8}\log^{\omega}n$, with $\omega>0$, let
$\varepsilon_n:=\max\{\bar \varepsilon_n,\, \tilde\varepsilon_n\}=n^{-3/8}(\log n)^{(3/8)\vee \omega}$. It is known from Lemma 10 of Ghosal and van der Vaart (2007), p. 220, that 
if
\begin{equation}\label{eq:340}
\Pi(B_{\mathrm{KL}}(P_0;\,\tilde\varepsilon_n^k))\gtrsim\exp{(-Bn\tilde\varepsilon_n^2)},\quad k\geq2,
\end{equation}
then
\begin{equation}\label{eq:93}
P_0^n\Pi( d(p_G,\,p_0)
\geq M\varepsilon_n\mid \DataXe)\lesssim (n\tilde\varepsilon_n^2)^{-k/2}.
\end{equation}
Thus, if condition \eqref{eq:340} holds for some $k\geq6$ so that $(n\tilde\varepsilon_n^2)^{-k/2}=o(\varepsilon_n^2)$, the value $k=6$ would suffice for the purpose,  then condition \eqref{eq:110} is satisfied.
\end{remark}

We now state a result on the rate of convergence for the Bayes' estimator, denoted by $\hat G_n^{\textrm B}$, of the mixing distribution $G_0$ for the $L^1$-Wasserstein deconvolution of Dirichlet-Laplace mixtures.
The Bayes' estimator is the posterior expectation of the random 
probability measure $\tilde G$, that is, 
$\hat G_n^{\textrm B}(\cdot):=E[\tilde G(\cdot)\mid \Data]$
and its expression can be derived from the expression of the posterior distribution, cf. Ghosh and Ramamoorthi (2003), pp. 144--146. In order to state the result, 
let $M_{\hat G_n^{\textrm B}}(s):=\int_{-\infty}^{+\infty} e^{sy}\,\d\hat G_n^{\textrm B}(y)$, $s\in\mathbb{R}$, whose expression can be obtained from
formula (2.6) of Lo (1984), p. 353, replacing $K(x,\,u)$ with $e^{s u}$ at 
all occurrences ($s$ playing the role of $x$). 
  
\begin{proposition}\label{prop:4}
Suppose that the assumptions of 
Corollary \ref{cor:1} hold. If, in addition, $\bar\alpha:=\alpha/\alpha(\mathbb{R})$ 
has finite moment generating function on some interval $(-s_0,\,s_0)$, with $0<s_0<1$, and 
\begin{equation}\label{eq:ass1}
\forall\, 0<s<s_0,\quad
\limsup_{n\rightarrow +\infty}P_0^nM_{\hat G_n^{\mathrm{B}}}(-s)\leq M_{G_0}(-s)
\,\, \mbox{and} \,\,\limsup_{n\rightarrow
+\infty}P_0^nM_{\hat G_n^{\mathrm{B}}}(s)\leq M_{G_0}(s),
\end{equation}
then
\begin{equation}\label{eq:wass1}
W_1(\hat G_n^{\mathrm{B}},\,G_0)=O_{\mathbf{P}}(n^{-1/8}(\log n)^{2/3}).
\end{equation}
\end{proposition}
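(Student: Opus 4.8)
The plan is to reduce the statement to the density estimation rate of Corollary \ref{cor:1} via the inversion inequality of Lemma \ref{lem:2}, using one elementary but decisive observation: by linearity of convolution and Tonelli's theorem, the Bayes' density estimator is itself a Laplace mixture, with mixing distribution the posterior mean measure $\hat G_n^{\mathrm B}$,
\[
\hat p_n^{\mathrm B}(\cdot)=\int_{\mathscr G}(G\ast f)(\cdot)\,\Pi(\d G\mid\Data)=\Bigl(\Bigl(\int_{\mathscr G}G\,\Pi(\d G\mid\Data)\Bigr)\ast f\Bigr)(\cdot)=p_{\hat G_n^{\mathrm B}}(\cdot).
\]
Hence $d(\hat p_n^{\mathrm B},p_0)=d(p_{\hat G_n^{\mathrm B}},p_{G_0})$, and Lemma \ref{lem:2} applies verbatim to the pair $(\hat G_n^{\mathrm B},G_0)$, provided both mixing distributions have Laplace transforms finite in a fixed neighborhood of the origin.

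For $G_0$ this is automatic: the doubly-exponential tail bound \eqref{eq:44} gives $M_{G_0}(s)<+\infty$ for every $s\in\mathbb{R}$, in particular on $(-s_0,s_0)$. For the random measure $\hat G_n^{\mathrm B}$ the hypothesis that $\bar\alpha$ has finite moment generating function on $(-s_0,s_0)$ ensures that $M_{\hat G_n^{\mathrm B}}(\pm s)<+\infty$ for $|s|<s_0$ for each $n$, while assumption \eqref{eq:ass1} supplies the asymptotic control: fixing $0<s<s_0$,
\[
\limsup_{n\to+\infty}P_0^n\bigl[M_{\hat G_n^{\mathrm B}}(s)+M_{\hat G_n^{\mathrm B}}(-s)\bigr]\le M_{G_0}(s)+M_{G_0}(-s)<+\infty,
\]
so that, by Markov's inequality, for every $\eta>0$ there exist $L<+\infty$ and $n_0$ with $P_0^n\bigl(M_{\hat G_n^{\mathrm B}}(s)\vee M_{\hat G_n^{\mathrm B}}(-s)>L\bigr)<\eta$ for $n\ge n_0$; i.e. $M_{\hat G_n^{\mathrm B}}(\pm s)=O_{\mathbf P}(1)$. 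On the corresponding event the constant produced by the inversion inequality can be chosen non-random (depending only on $s_0$, $L$ and $M_{G_0}(\pm s_0)$).

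It then remains to run Lemma \ref{lem:2} with $p=1$. On the event above it gives $W_1(\hat G_n^{\mathrm B},G_0)\lesssim \Psi_L\bigl(d(p_{\hat G_n^{\mathrm B}},p_{G_0})\bigr)$, where $\Psi_L$ is the modulus of the inversion inequality: a power of the density distance governed by the polynomial decay order of $\hat f(t)=(1+t^2)^{-1}$, equal to $1/3$, times a logarithmic correction of order $1/2$ in $\log(1/d)$ and a multiplicative constant as above; if Lemma \ref{lem:2} is stated in the $L^2$-norm, one first passes to it from the Hellinger distance through $\|p_G-p_{G'}\|_2^2\le 4\|f\|_\infty h^2(p_G,p_{G'})$ with $\|f\|_\infty=1/2$. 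Plugging in Corollary \ref{cor:1}, $d(\hat p_n^{\mathrm B},p_0)=O_{\mathbf P}(n^{-3/8}\log^{1/2}n)$, one gets $d(\hat p_n^{\mathrm B},p_0)^{1/3}=O_{\mathbf P}(n^{-1/8}\log^{1/6}n)$, while the logarithmic correction contributes a further $\log^{1/2}n$ factor, for a total of $n^{-1/8}(\log n)^{2/3}$. Intersecting with the event of Corollary \ref{cor:1} and letting $\eta\downarrow0$ in the standard way for $O_{\mathbf P}$-statements yields \eqref{eq:wass1}.

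The deep content is packaged in Lemma \ref{lem:2} and Corollary \ref{cor:1}; the only genuine obstacle in assembling the proof is the treatment of the randomness of $\hat G_n^{\mathrm B}$, since the inversion inequality is sensitive to the tails of the mixing distributions through their Laplace transforms and $\hat G_n^{\mathrm B}$, being data-dependent, carries no a priori tail control. Assumption \eqref{eq:ass1} is exactly the device that remedies this, and converting an $\limsup$ of $P_0^n$-means into the tightness statement $M_{\hat G_n^{\mathrm B}}(\pm s)=O_{\mathbf P}(1)$ via Markov's inequality is the key manoeuvre; the rest is bookkeeping with $\Psi_L$ and the rate of Corollary \ref{cor:1}.
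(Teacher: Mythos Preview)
Your proposal is correct and follows essentially the same route as the paper: both combine Corollary~\ref{cor:1} with the inversion inequality of Lemma~\ref{lem:2} (with $p=1$, $\beta=2$) applied to the pair $(\hat G_n^{\mathrm B},G_0)$, using the identification $\hat p_n^{\mathrm B}=p_{\hat G_n^{\mathrm B}}$, and control the data-dependent Laplace transforms $M_{\hat G_n^{\mathrm B}}(\pm s)$ through Markov's inequality and assumption~\eqref{eq:ass1}. The only cosmetic difference is organizational: the paper re-opens the proof of Lemma~\ref{lem:2} to split $P_0^n(W_1>T\rho_n)$ directly into a Hellinger term and an MGF term, whereas you first establish $M_{\hat G_n^{\mathrm B}}(\pm s)=O_{\mathbf P}(1)$ and then invoke Lemma~\ref{lem:2} as a black box with a constant uniform over the resulting high-probability event.
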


\begin{proof}
Let $\rho_n:=n^{-1/8}(\log n)^{2/3}$ and, for a suitable finite constant $c_1>0$, $M_n=c_1(\log n)$. Fix numbers $s$ and $u$ such that $0<u<s<s_0<1$. For sufficiently large constants $0<T, \,T',\,T''<+\infty$,
reasoning as in Lemma \ref{lem:2}, 
\[\begin{split}
P_0^n(W_1(\hat G_n^{\mathrm{B}},\,G_0)>T\rho_n)&\leq
P_0^n(h(\hat p^{\mathrm{B}}_n,\,p_0)>T'\rho_n^3(\log n)^{-3/2})\\
&\qquad \quad+
P_0^n(M_{\hat G_n^{\mathrm{B}}}(-s)+M_{\hat G_n^{\mathrm{B}}}(s)>T''e^{uM_n}\rho_n)=:P_1+P_2.
\end{split}\]
By Corollary \ref{cor:1}, $h(\hat p^{\mathrm{B}}_n,\,p_0)=O_{\mathbf{P}}(n^{-3/8}\log^{1/2}n)$. Hence, $P_1\rightarrow0$ as $n\rightarrow+\infty$.
By Markov's inequality, for some real $\nu>0$,
\[\begin{split}
P_2
&\lesssim  e^{-uM_n}\rho_n^{-1}
[P_0^nM_{\hat G_n^{\mathrm{B}}}(-s)+P_0^nM_{\hat G_n^{\mathrm{B}}}(s)]\\
&\lesssim \frac{1}{n^\nu}
[P_0^nM_{\hat G_n^{\mathrm{B}}}(-s)+P_0^nM_{\hat G_n^{\mathrm{B}}}(s)]
\rightarrow0 \quad\mbox{ as $n\rightarrow+\infty$}
\end{split}\]
by assumption (\ref{eq:ass1}). Thus, $P_2\rightarrow0$ as $n\rightarrow+\infty$. The assertion follows. 
\qed
\end{proof}

Some remarks are in order.
There are two main reasons why we focus on deconvolution in the $L^1$-Wasserstein metric. 
The first one is related to the inversion inequality in \eqref{eq:wasserstein}, 
where the upper bound on the $L^p$-Wasserstein metric, as a function of the order $p\geq1$, increases as $p$ gets larger, thus making it advisable 
to begin the analysis from the smallest value of $p$.
The second reason is related to the interpretation of the assertion in \eqref{eq:wass1}: 
the $L^1$-Wasserstein distance between any
two probability measures $G_1$ and $G_2$ on some Borel-measurable space $(\mathscr{Y},\,\mathscr{B}(\mathscr{Y}))$, $\mathscr Y\subseteq \mathbb{R}$, 
with finite first absolute moments, is by itself an interesting distance  
because it metrizes
weak convergence plus convergence of the first absolute moments, 
but it is even more interesting in view of the fact that,
letting $G_1^{-1}(\cdot)$ and $G_2^{-1}(\cdot)$ denote the left-continuous inverse or quantile functions, $G_i^{-1}(u):=\inf\{y\in\mathscr{Y}:\,G_i(y)\geq u\}$, $u\in(0,\,1)$, $i=1,\,2$,
it can be written as the $L^1$-distance between the quantile functions or, equivalently, as the $L^1$-distance between the cumulative distribution functions, 
\begin{equation}\label{eq:34}
W_1(G_1,\,G_2)
=\int_{0}^{1}|G_1^{-1}(u)-G_2^{-1}(u)|\,\d u=
\int_{\mathscr Y}|G_1(y)-G_2(y)|\,\d y=||G_1-G_2||_1,
\end{equation}
see, \emph{e.g.}, Shorack and Wellner (1986), pp. 64--66. The representation in \eqref{eq:34} was obtained by Dall'Aglio (1956). Thus, by rewriting $W_1(\hat G_n^{\mathrm{B}},\,G_0)$ 
as the $L^1$-distance between 
the c.d.f.'s $\hat G_n^{\mathrm{B}}$ and $G_0$, 
the assertion of Proposition \ref{prop:4},
$$W_1(\hat G_n^{\mathrm{B}},\,G_0)=||\hat G_n^{\mathrm{B}}-G_0||_1=O_{\mathbf{P}}(n^{-1/8}(\log n)^{2/3}),$$
becomes more transparent and meaningful.


\section{Rates of convergence for ML estimation and $L^1$-Wasserstein deconvolution of Laplace mixtures}\label{sec:MLE}
In this section, we first study the rate of convergence in the Hellinger metric for the MLE 
$\hat p_n$ of a Laplace mixture density $p_0\equiv p_{G_0}=G_0 \ast f$, with unknown mixing distribution $G_0\in\mathscr G$. We then derive the rate of convergence in the $L^1$-Wasserstein metric for the MLE $\hat G_n$ of the mixing distribution $G_0$, which corresponds to the MLE $\hat p_n$ of the mixed density $p_0$, by appealing to an inversion inequality relating the Hellinger distance between Laplace mixture densities to any $L^p$-Wasserstein distance, $p\geq1$, between the corresponding mixing distributions (see
Lemma \ref{lem:2} in Appendix D).

A MLE  
$\hat p_n$ of $p_0$ is a measurable function of the observations taking values in $\mathscr P:=\{p_G:\,G\in\mathscr G\}$ such that
\[\hat p_n\in \underset{p_G\in \mathscr P}{\arg\max}
\frac{1}{n}\sum_{i=1}^n\log p_G(X_i)=\underset{p_G\in \mathscr P}{\arg\max}
\int(\log p_G)\,\d {\mathbb P_n},\]
where ${\mathbb P_n}:={n}^{-1}\sum_{i=1}^n\delta_{X_i}$ is the empirical measure
associated with the random sample $X_1,\,\ldots,\,X_n$, namely,
the discrete uniform distribution on the sample values that puts mass $1/n$ on each one of the observations. We assume that the MLE exists, but do not require it to be unique,
see Lindsay (1995), Theorem 18, p. 112, for sufficient conditions ensuring uniqueness.

Results on rates of convergence in the Hellinger metric for the MLE of a density
can be found in Birg\'{e} and Massart (1993), Van de Geer (1993) and Wong and Shen (1995); 
it can, however, be difficult to calculate the
$L^2$-metric entropy \emph{with bracketing} of the square-root densities that is employed in these articles.
Taking instead into account that a mixture model 
$\{\int_{\mathscr Y}K(\cdot,\,y)\,\d G(y):\,G\in\mathscr G\}$ is the closure of the convex hull of the collection of kernels 
$\{K(\cdot,\,y):\,y\in\mathscr Y\subseteq\mathbb{R}\}$, which is typically a much smaller class, a bound on a form of metric entropy \emph{without bracketing} of the class of mixtures can be derived from a covering number of the class of kernels (a result on metric entropy \emph{without bracketing} of convex hulls that is deducible from Ball and Pajor (1990)), so that a relatively simple \vir{recipe} can be given
to obtain (an upper bound on) the rate of convergence in the Hellinger metric for the MLE of a density in terms of the \vir{dimension} of the class of kernels and the behaviour of $p_0$ near zero, cf. Corollary 2.3 of Van de Geer (1996), p. 298.

\begin{proposition}\label{prop:3}
Let the sampling density $p_0\equiv p_{G_0}=G_0\ast f$, with the kernel density $f$ 
being the standard Laplace and the mixing distribution $G_0\in\mathscr G$. 
Suppose that, for a sequence of non-negative 
real numbers $\sigma_n=O(n^{-3/8}\log^{1/8}n)$,
we have\smallskip
\begin{description}
\item[$(a)$] $\int_{p_0\leq \sigma_n}p_0\,\d \lambda\lesssim \sigma_n^2$,\\
\item[$(b)$] $\int_{p_0>\sigma_n}(1/p_0)\,\d \lambda\lesssim \log(1/\sigma_n)$.
\end{description}
Then, 
\begin{equation*}\label{eq:MLEhel}
h(\hat p_n,\, p_0)=O_{\mathbf{P}}(n^{-3/8}\log^{1/8}n).
\end{equation*}
\end{proposition}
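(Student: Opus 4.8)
The plan is to check that the hypotheses of Corollary 2.3 of Van de Geer (1996) are met for the standard Laplace kernel with the candidate rate $\sigma_n = O(n^{-3/8}\log^{1/8}n)$, and then to read off the conclusion. The structural starting point is that, since $p_G = \int_{\mathscr Y} f(\cdot-y)\,\d G(y)$ with $f$ a fixed density, the model $\mathscr P$ is the closure of the convex hull of the one-parameter family of translates $\mathscr F := \{f(\cdot-y):\, y\in\mathbb R\}$. As $f(z)=\frac{1}{2}e^{-|z|}$ is Lipschitz with $\|f'\|_2 = \|f\|_2 = 1/2$, we get $\|f(\cdot-y)-f(\cdot-y')\|_2 \le \frac{1}{2}|y-y'|$, so that $\mathscr F$ restricted to any bounded window of locations is a parametric $L^2$-entropy class, $\log N(\varepsilon,\, \{f(\cdot-y):\, |y|\le a\},\,\|\cdot\|_2)\lesssim \log(a/\varepsilon)$; this is the precise sense in which the \vir{dimension} of the kernel class equals one.

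Next I would convert this into a bound on a metric entropy \emph{without} bracketing of the mixtures. By the convex-hull entropy estimate deducible from Ball and Pajor (1990), in the form exploited by Van de Geer, a class of convex combinations of a one-dimensional Lipschitz family of uniformly bounded functions has $L^2$-entropy of order $\varepsilon^{-2/3}$, up to a factor at most polynomial in the window length $a$; taking windows that grow only logarithmically, $a=a_n\asymp\log(1/\sigma_n)$, contributes just a power-of-$\log(1/\sigma_n)$ correction to the entropy constant. The relevant entropy integral then behaves like $\int_0^{\sigma_n}\varepsilon^{-1/3}\,\d\varepsilon\asymp\sigma_n^{2/3}$, up to a logarithmic factor.

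The third step is to feed this entropy bound, together with hypotheses $(a)$ and $(b)$, into Van de Geer's basic inequality for the MLE. Following her argument, at $\hat p_n$ the squared Hellinger distance $h^2(\hat p_n,\,p_0)$ is dominated by $\sup_G(\mathbb P_n-P_0)g_G$ for a suitably normalized class $\{g_G:\,G\in\mathscr G\}$ built from $\{p_G\}$ and $p_0$; splitting the integration along $\{p_0\le\sigma_n\}$ and $\{p_0>\sigma_n\}$, condition $(a)$ controls the contribution of the region where $p_0$ is small, while condition $(b)$, together with $\|p_G\|_\infty\le\|f\|_\infty=1/2$, yields $\int_{p_0>\sigma_n}(p_G/p_0)\,\d\lambda\lesssim\log(1/\sigma_n)$ and so keeps the normalization under control. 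Solving the resulting balance $\sqrt{n}\,\sigma_n^2\asymp(\text{entropy integral})\asymp(\log(1/\sigma_n))^{1/3}\sigma_n^{2/3}$ gives $\sigma_n\asymp n^{-3/8}(\log n)^{1/8}$, hence $h(\hat p_n,\,p_0)=O_{\mathbf P}(n^{-3/8}\log^{1/8}n)$.

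The main obstacle is the second and third steps taken together: obtaining the convex-hull entropy bound with the sharp exponent $2/3$ and then tracking the logarithmic factors — the $\log$ from the one-dimensional translation family on a growing window, and the power of $\log(1/\sigma_n)$ forced by condition $(b)$ through the normalization — precisely enough that the balance equation returns the exponent $1/8$ rather than a larger power of $\log n$. A secondary point is the usual bookkeeping required to apply an entropy-without-bracketing MLE bound: measurability of $\hat p_n$, and verifying that restricting to the sieve of mixtures with locations in $[-a_n,a_n]$ (or using the corresponding weighting) entails no loss under $(a)$ and $(b)$.
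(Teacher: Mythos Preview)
Your plan is essentially the paper's proof: both invoke Corollary~2.3 of Van de Geer (1996) (equivalently Theorem~7.7 of Van de Geer (2000)), use the pointwise Lipschitz property of the Laplace kernel together with the Ball--Pajor convex-hull bound to obtain the entropy exponent $2/3$, feed in conditions $(a)$ and $(b)$ to control the $\sigma_n$-truncation and the envelope, and solve the balance $\sqrt{n}\,\delta_n^2\asymp\delta_n^{2/3}\log^{1/6}(1/\delta_n)$ to get $\delta_n\asymp n^{-3/8}\log^{1/8}n$.

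The one technical difference worth noting is that the paper does \emph{not} introduce a location sieve $[-a_n,a_n]$, nor does it compute entropy in the Lebesgue $L^2$-norm. It works directly in the empirical norm $\|\cdot\|_{2,\mathbb P_n}$ (which is what Van de Geer's theorem actually requires) with the ratio class $\mathscr K/p_0:=\{f(\cdot-y)p_0^{-1}\1\{p_0>\sigma_n\}:y\in\mathscr Y\}$ and its envelope $\bar K$; condition~$(b)$ then gives $\int\bar K^2\,\d P_0\lesssim\log(1/\sigma_n)$, and rescaling the empirical measure by this envelope bound is what produces the single $\log^{1/3}(1/\delta_n)$ factor in the entropy. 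This route sidesteps entirely the growing-window truncation and the associated \vir{bookkeeping} you flag as a secondary obstacle, so your worry about accumulating an extra power of $\log n$ from the window $a_n$ does not arise.
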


\begin{proof}
We begin by spelling out the remark mentioned in the introduction concerning the fact that 
a mixture model is the closure of the convex hull of the collection of kernels. 
Recall that the convex hull of a class $\mathscr K$ of functions, 
denoted by $\mathrm{conv}(\mathscr K)$, is defined as the set of all finite convex combinations of functions in $\mathscr K$,
$$\mathrm{conv}(\mathscr K):=\Bigg\{\sum_{j=1}^r\theta_jK_j:\, \theta_j\geq0,\,K_j\in\mathscr K,\,j=1,\,\ldots,\,r,\,\sum_{j=1}^r\theta_j=1,\,r\in\mathbb{N}\Bigg\}.$$
In our case, $$\mathscr K:=\{f(\cdot-y):\,y\in\mathscr Y\subseteq\mathbb{R}\}$$ is the collection of kernels with $f$ the standard Laplace density.
The class $\mathscr P:=\{p_G:\,G\in\mathscr G\}$ of all Laplace convolution mixtures $p_G=G\ast f$ 
is the closure of the convex hull of $\mathscr K$,
$$\mathscr P=\overline{\mathrm{conv}}(\mathscr K).$$ Clearly, $\mathscr P$  is itself a convex class.
This remark enables us to apply Theorem 2.2 and Corollary 2.3 of Van de Geer (1996), pp. 297--298 and 310, or, equivalently, 
Theorem 7.7 of  Van de Geer (2000), pp. 104--105, whose conditions 
are hereafter shown to be satisfied. To the aim, we define the class
$${\mathscr K}/p_0:=\bigg\{\frac{f(\cdot-y)}{p_0(\cdot)}\1\{p_0>\sigma_n\}:\,y\in\mathscr{Y}\bigg\}$$
and the envelope function
$$\bar K(\cdot):=\sup_{y\in\mathscr{Y}}\frac{f(\cdot-y)}{p_0(\cdot)}\1\{p_0>\sigma_n\},$$
where we have suppressed the subscript $n$ in 
${\mathscr K}/p_0$ and $\bar K(\cdot)$
stressing possible dependence on $\sigma_n$ when $\sigma_n>0$.
Since, by assumption $(a)$, 
\[\int_{p_0\leq\sigma_n}\d P_0= \int_{p_0\leq\sigma_n}p_0\,\d \lambda\lesssim \sigma_n^2
\]
and, by assumption $(b)$, together with the fact that $\|f\|_\infty=1/2$,
\begin{equation}\label{eq:envelope}
\int\bar K^2\,\d P_0 \lesssim
\int_{p_0>\sigma_n}\frac{1}{p_0}\,\d \lambda\lesssim \log(1/\sigma_n),
\end{equation}
we can take the sequence $\delta_n^2\propto\sigma_n^2$ in condition (7.21) of Theorem 7.7 of Van de Geer (2000), p. 104.  
Because the (standard) Laplace kernel density $f$ is Lipschitz,
$$\forall\,y_1,\,y_2\in\mathscr Y,\quad|f(\cdot-y_1)-f(\cdot-y_2)|\leq\frac{1}{2} |y_1-y_2|,$$ see, \emph{e.g.}, Lemma A.1 in Scricciolo (2011),
pp. 299--300, on the set 
\begin{equation}\label{eq:set1}
\pg{\int \bar K^2\,\d {\mathbb P_n} \leq T^2\log(1/\delta_n)},
\end{equation}
where $T>0$ is a finite constant, we find that, 
for
$\d{\mathbb Q_n}:= \d {\mathbb P_n}/(T^2\log(1/\delta_n))$,
\[N(\delta,\,{\mathscr K}/p_0,\,||\cdot||_{2,{\mathbb Q_n}})\lesssim \delta^{-1}\quad\mbox{for }\delta>0,\]
where $||\cdot||_{2,{\mathbb Q_n}}$ denotes the $L^2({\mathbb Q_n})$-norm, that is, 
 $||g||_{2,{\mathbb Q_n}}:=(\int|g|^2\,\d {\mathbb Q_n})^{1/2}$.
So, in view of the result of Ball and Pajor (1990), reported as Theorem 1.1 in Van de Geer (1996), p. 295,
on the same set as in \eqref{eq:set1},
we have 
\[\log N(\delta,\,\overline{\mathrm{conv}}({\mathscr K}/p_0),\,||\cdot||_{2,{\mathbb Q_n}})\lesssim \delta^{-2/3},\]
hence
\[\log N(\delta,\,\overline{\mathrm{conv}}({\mathscr K}/p_0),\,||\cdot||_{2,{\mathbb P_n}})\lesssim \pt{\frac{T\log ^{1/2}(1/\delta_n)}{\delta}}^{2/3}.\]
Next, defined the class
\[\mathscr P^{(\mathrm{conv})}_{\sigma_n}:=\pg{\frac{2p_G}{p_G+p_0}\1\{p_0>\sigma_n\}:\,p_G\in\mathscr P}\]
considered in condition (7.20) of Theorem 7.7 in Van de Geer (2000), p. 104, 
since \[\log N(2\delta,\,\mathscr P^{(\mathrm{conv})}_{\sigma_n},\,||\cdot||_{2,{\mathbb P_n}})\leq
\log N(\delta,\,\overline{\mathrm{conv}}({\mathscr K}/p_0),\,||\cdot||_{2,{\mathbb P_n}}),\]
in view of \eqref{eq:envelope}, we have
\[\sup_{\delta>0}\frac{\log N(\delta,\,\mathscr P^{(\mathrm{conv})}_{\sigma_n},\,||\cdot||_{2,{\mathbb P_n}})}{H(\delta)}=O_{\mathbf{P}}(1)\]
for the non-increasing function of $\delta$
$$H(\delta):=\delta^{-2/3}\log^{1/3}(1/\delta_n),\quad\delta>0.$$ 
Taken $\Psi(\delta):=c_1\delta^{2/3}\log^{1/6}(1/\delta_n)$ with a suitable finite constant $c_1>0$, we have
$$\forall\,\delta\in(0,\,1),\,\,\,\Psi(\delta)\geq \pt{\int_{\delta^2/c}^\delta H^{1/2}(u)\,\d u} \vee \delta$$
and, for some $\varepsilon>0$, 
$\Psi(\delta)/\delta^{2-\varepsilon}$
is non-increasing. Then, for $\delta_n$ such that 
$\sqrt{n}\delta_n^2\geq \Psi(\delta_n)$, cf. condition (7.22) of Theorem 7.7 in Van de Geer (2000), p. 104, which implies that, consistently with the initial choice, we can take $\delta_n\propto n^{-3/8}\log ^{1/8}n$, we have 
$h(\hat p_n,\, p_0)=O_{\mathbf{P}}(\delta_n)$ and the proof is complete.
\qed
\end{proof}

\begin{remark}
If $p_0>0$ and $\mathscr Y$ is a compact interval $[-a,\,a]$, with $a>0$, then $h(\hat p_n,\,p_0)=O_{\mathbf{P}}(n^{-3/8})$. In fact, the sequence
$\sigma_n\equiv 0$, $||\bar K||_\infty\leq e^{2a}$ and $\int\bar K^2\,\d P_0 \leq e^{4a}$ so that, on the set 
$\{\int \bar K^2\,\d {\mathbb P_n}\leq T\}$, the entropy
$\log N(\delta,\,\overline{\mathrm{conv}}({\mathscr K}/p_0),\,||\cdot||_{2,{\mathbb P_n}})\lesssim \delta^{-2/3}$ and, reasoning as in Proposition \ref{prop:3}, we find the rate $n^{-3/8}$.
\end{remark}

We now derive a consequence of Proposition \ref{prop:3} on the rate of convergence in the $L^1$-Wasserstein metric for the MLE of $G_0$.
A MLE $\hat p_n$ of the \emph{mixed} density $p_0$ corresponds to a MLE $\hat G_n$ of the \emph{mixing} distribution $G_0$, that is, $\hat p_n\equiv p_{\hat G_n}$, such that
$$\hat G_n\in \underset{G\in \mathscr G}{\arg\max}
\frac{1}{n}\sum_{i=1}^n\log p_G(X_i)=\underset{G\in \mathscr G}{\arg\max}
\int(\log p_G)\,\d {\mathbb P_n}.$$
Clearly, $\hat G_n$ is a discrete distribution, but
we do not know the number of its components: Lindsay (1995) showed that the MLE $\hat G_n$ 
is a discrete distribution 
supported on at most $k\leq n$ support points, $k$ being the number of distinct observed values or data points.

\begin{corollary}\label{cor:wasserstein}
Suppose that the assumptions of 
Proposition \ref{prop:3} hold. If, in addition,
the mixing distribution $G_0$ has
finite moment generating function 
in some interval $(-s_0,\,s_0)$, with $0<s_0<1$, and 
\begin{equation}\label{eq:ass}
\forall\, 0<s<s_0,\quad
\limsup_{n\rightarrow +\infty}P_0^nM_{\hat G_n}(-s)\leq M_{G_0}(-s)
\quad \mbox{and} \quad \limsup_{n\rightarrow
+\infty}P_0^nM_{\hat G_n}(s)\leq M_{G_0}(s),
\end{equation}
where $M_{\hat G_n}(s):=\int_{\mathscr Y} e^{sy}\,\d \hat G_n(y)$, $s\in\mathbb{R}$, then
\[W_1(\hat G_n,\,G_0)=O_{\mathbf{P}}(n^{-1/8}(\log n)^{13/24}).\]
\end{corollary}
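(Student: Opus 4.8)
The plan is to mimic the proof of Proposition \ref{prop:4}, replacing the Bayes' estimators $\hat p_n^{\mathrm{B}}$ and $\hat G_n^{\mathrm{B}}$ by their maximum likelihood counterparts $\hat p_n$ and $\hat G_n$, and feeding in the Hellinger rate from Proposition \ref{prop:3} rather than the Bayes rate from Corollary \ref{cor:1}. Concretely, set $\rho_n:=n^{-1/8}(\log n)^{13/24}$ and $M_n:=c_1\log n$ for a suitable finite constant $c_1>0$, and fix numbers $0<u<s<s_0<1$. The inversion inequality of Lemma \ref{lem:2} (the same device invoked in the proof of Proposition \ref{prop:4}) bounds $W_1(\hat G_n,\,G_0)$ in terms of a power of $h(\hat p_n,\,p_0)$ together with a tail contribution controlled by $M_{\hat G_n}(-s)+M_{\hat G_n}(s)$ and a factor $e^{uM_n}$. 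This gives the two-term split
\[
P_0^n(W_1(\hat G_n,\,G_0)>T\rho_n)\leq P_1+P_2,
\]
where $P_1:=P_0^n(h(\hat p_n,\,p_0)>T'\rho_n^3(\log n)^{-3/2})$ and $P_2:=P_0^n(M_{\hat G_n}(-s)+M_{\hat G_n}(s)>T''e^{uM_n}\rho_n)$ for suitably large constants $T,T',T''$.

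The first term is handled by Proposition \ref{prop:3}: since $h(\hat p_n,\,p_0)=O_{\mathbf P}(n^{-3/8}\log^{1/8}n)$, one needs $\rho_n^3(\log n)^{-3/2}\gtrsim n^{-3/8}\log^{1/8}n$, equivalently $\rho_n\gtrsim n^{-1/8}(\log n)^{1/2+1/24}=n^{-1/8}(\log n)^{13/24}$. This is exactly the exponent on $\rho_n$ in the statement, which is why $(\log n)^{13/24}$ rather than $(\log n)^{2/3}$ appears — the MLE density rate carries a $\log^{1/8}n$ factor, whereas the Bayes density rate in Corollary \ref{cor:1} carries $\log^{1/2}n$; tracing $\omega^3/\log^{3/2}$ through gives $(1/8)/3+1/2=13/24$ in the ML case versus $(1/2)/3+1/2=2/3$ in the Bayes case. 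Hence $P_1\to0$. The second term is dispatched by Markov's inequality exactly as in Proposition \ref{prop:4}: $P_2\lesssim e^{-uM_n}\rho_n^{-1}[P_0^nM_{\hat G_n}(-s)+P_0^nM_{\hat G_n}(s)]$, and $e^{-uM_n}\rho_n^{-1}=n^{-uc_1}n^{1/8}(\log n)^{-13/24}\lesssim n^{-\nu}$ for some $\nu>0$ provided $c_1$ is chosen with $uc_1>1/8$; assumption \eqref{eq:ass} then keeps $P_0^nM_{\hat G_n}(\pm s)$ bounded, so $P_2\to0$.

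The only genuine subtlety — and thus the step I would flag as the main obstacle — is verifying that Lemma \ref{lem:2} is applicable to $\hat G_n$ and yields precisely the split used above. The Bayes case relies on the explicit Lo-type formula for $M_{\hat G_n^{\mathrm B}}$; for the MLE there is no closed form, but $\hat G_n$ is by Lindsay's theorem a discrete distribution on at most $n$ atoms, so $M_{\hat G_n}(\pm s)$ is a finite sum and is finite for every $s$, making the inversion inequality formally applicable for each fixed sample. One must also ensure that the truncation level in Lemma \ref{lem:2} can be taken of order $M_n=c_1\log n$ while the exceptional probability from the inversion step is absorbed into $P_2$; this is the same bookkeeping carried out in the proof of Proposition \ref{prop:4}, and no new ideas are needed beyond tracking the exponents. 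Assembling $P_1+P_2\to0$ completes the proof.
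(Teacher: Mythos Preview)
Your proposal is correct and follows essentially the same argument as the paper's proof: the same two-term split via Lemma~\ref{lem:2}, with $P_1$ controlled by Proposition~\ref{prop:3} and $P_2$ by Markov's inequality together with assumption~\eqref{eq:ass}. Your additional remarks explaining the $(\log n)^{13/24}$ exponent and the applicability of Lemma~\ref{lem:2} to the discrete MLE $\hat G_n$ are helpful elaborations that the paper leaves implicit.
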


\begin{proof}
Let $k_n:=n^{-1/8}(\log n)^{13/24}$ and, for a suitable finite constant $c_2>0$, $M_n=c_2(\log n)$. Fix numbers $s$ and $u$ such that $0<u<s<s_0<1$. For sufficiently large constants $0<T,\,T',\,T''<+\infty$,
reasoning as in Lemma \ref{lem:2}, we have  
\[\begin{split}
P_0^n(W_1(\hat G_n,\,G_0)>T k_n)&\leq
P_0^n(h(\hat p_n,\, p_0)>T' k_n^3(\log n)^{-3/2})\\
&\qquad\quad+
P_0^n(M_{\hat G_n}(-s)+M_{\hat G_n}(s)>T''k_ne^{uM_n})=:P_1+P_2.
\end{split}\]
The term $P_1$ can be made arbitrarily small because $h(\hat p_n,\, p_0)=O_{\mathbf{P}}(n^{-3/8}\log^{1/8}n)$ by Proposition \ref{prop:3}. The term $P_2$ goes to zero as $n\rightarrow+\infty$: in fact, by Markov's inequality and assumption (\ref{eq:ass}), for some real $0<l<+\infty$,
\[\begin{split}
P_2&\lesssim e^{-uM_n}k_n^{-1}
[P_0^nM_{\hat G_n}(-s)+P_0^nM_{\hat G_n}(s)]\\
&\lesssim \frac{1}{n^l}
[P_0^nM_{\hat G_n}(-s)+P_0^nM_{\hat G_n}(s)]\rightarrow0 \quad \mbox{ as }n\rightarrow+\infty
\end{split}\]
and the assertion follows. 
\qed
\end{proof}

\begin{remark}
Assumption (\ref{eq:ass}) essentially requires that $M_{\hat G_n}$ is an asymptotically unbiased estimator of $M_{G_0}$ in some neighborhood of zero $(-s_0,\,s_0)$, with $0<s_0<1$.
An analysis of the asymptotic behaviour of certain linear functionals of the MLE $\hat G_n$ is 
presented in Van der Geer (1995), wherein sufficient conditions are provided so that they are $\sqrt{n}$-consistent, asymptotically normal and efficient.
\end{remark}


\section{Merging of Bayes and ML for $L^1$-Wasserstein deconvolution of Laplace mixtures}\label{sec:merging}
In this section, we show that the Bayes' estimator and the MLE of  
$G_0$ merge in the $L^1$-Wasserstein metric, their discrepancy 
vanishing, at worst, at rate $n^{-1/8}(\log n)^{2/3}$
because they both consistently estimate $G_0$ at a speed which is 
within a $(\log n)$-factor of 
$n^{-1/8}$, cf. Proposition \ref{prop:4} and Corollary \ref{cor:wasserstein}.

\begin{proposition}\label{prop:merging}
Under the assumptions of Proposition \ref{prop:4}  and Corollary \ref{cor:wasserstein}, 
we have
\begin{equation}\label{eq:merge}
W_1(\hat G_n^{\mathrm{B}},\,\hat G_n)=O_{\mathbf{P}}(n^{-1/8}(\log n)^{2/3}).
\end{equation}
\end{proposition}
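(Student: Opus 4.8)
The plan is to obtain the bound directly from the triangle inequality for the $L^1$-Wasserstein metric, combined with the two convergence rates already at our disposal. First I would observe that, under the standing assumptions, both $\hat G_n^{\mathrm{B}}$ and $\hat G_n$ have finite first absolute moments $P_0^n$-almost surely: $\hat G_n$ is a discrete distribution supported on the observed data points, hence has bounded support for each fixed sample, while $\hat G_n^{\mathrm{B}}$ inherits finiteness of its mean from the assumption that $\bar\alpha=\alpha/\alpha(\mathbb{R})$ has finite moment generating function on a neighborhood of zero (so $M_{\hat G_n^{\mathrm{B}}}$ is finite near zero, and in particular its first moment is finite). Consequently $W_1(\hat G_n^{\mathrm{B}},\,\hat G_n)$ is well-defined, and
\[
W_1(\hat G_n^{\mathrm{B}},\,\hat G_n)\leq W_1(\hat G_n^{\mathrm{B}},\,G_0)+W_1(G_0,\,\hat G_n).
\]

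Next I would invoke the two estimates established earlier. The hypotheses of Proposition \ref{prop:merging} are exactly the union of the hypotheses of Proposition \ref{prop:4} (namely the assumptions of Corollary \ref{cor:1} together with \eqref{eq:ass1}) and those of Corollary \ref{cor:wasserstein} (the assumptions of Proposition \ref{prop:3} together with \eqref{eq:ass}). Hence both conclusions hold simultaneously on the same probability space governed by $P_0^n$: Proposition \ref{prop:4} gives $W_1(\hat G_n^{\mathrm{B}},\,G_0)=O_{\mathbf{P}}(n^{-1/8}(\log n)^{2/3})$, and Corollary \ref{cor:wasserstein} gives $W_1(\hat G_n,\,G_0)=O_{\mathbf{P}}(n^{-1/8}(\log n)^{13/24})$.

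Finally I would combine the two rates through the triangle inequality above. Since $13/24<16/24=2/3$, the second term is of strictly smaller order, so the larger exponent $2/3$ governs the sum; more precisely, given $\varepsilon>0$ one picks constants $T_1,T_2$ so that each of $\{W_1(\hat G_n^{\mathrm{B}},\,G_0)>T_1 n^{-1/8}(\log n)^{2/3}\}$ and $\{W_1(G_0,\,\hat G_n)>T_2 n^{-1/8}(\log n)^{13/24}\}$ has $P_0^n$-probability below $\varepsilon/2$ for all large $n$, and then $\{W_1(\hat G_n^{\mathrm{B}},\,\hat G_n)>(T_1+T_2)n^{-1/8}(\log n)^{2/3}\}$ has probability below $\varepsilon$ for all large $n$, which is \eqref{eq:merge}. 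There is no genuine obstacle here — the statement is labelled a \emph{by-product} precisely because it is an immediate triangle-inequality consequence of the two $L^1$-Wasserstein deconvolution rates; the only points deserving a word of care are the finiteness of the first moments of the two estimators (so that the triangle inequality is legitimately applied) and the trivial bookkeeping of which power of $\log n$ dominates.
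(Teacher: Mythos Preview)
Your proposal is correct and follows exactly the paper's own argument: triangle inequality for $W_1$, then invoke Proposition~\ref{prop:4} and Corollary~\ref{cor:wasserstein}, with the larger $(\log n)^{2/3}$ factor dominating. One small inaccuracy worth fixing: the MLE $\hat G_n$ is discrete with at most $n$ atoms by Lindsay's result, but its support points are \emph{not} the observed data points themselves; your conclusion that it has finite first moment for each fixed sample still stands, and in any case the paper does not even pause over this well-definedness issue.
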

\begin{proof}
By the triangle inequality, 
$$
W_1(\hat G_n^{\mathrm{B}},\,\hat G_n)\leq W_1(\hat G_n^{\mathrm{B}},\,G_0)+ W_1(G_0,\,\hat G_n),$$
where 
$W_1(\hat G_n^{\mathrm{B}},\,G_0)=O_{\mathbf{P}}(n^{-1/8}(\log n)^{2/3})$ and $W_1(G_0,\,\hat G_n)=O_{\mathbf{P}}(n^{-1/8}(\log n)^{13/24})$ by Proposition \ref{prop:4} and Corollary \ref{cor:wasserstein}, respectively. 
Relationship \eqref{eq:merge} follows. 
\qed
\end{proof}

Proposition \ref{prop:merging} states that the Bayes' estimator and the MLE of $G_0$ will eventually be indistinguishable and (an upper bound on) the speed of convergence for their $L^1$-Wasserstein discrepancy is determined by the stochastic orders of their errors in recovering $G_0$. 
The crucial question that remains open is whether the Bayes' estimator and the MLE are rate-optimal.
Concerning this issue, we note that, on the one hand, other deconvolution estimators for the distribution function attain the rate $n^{-1/8}$ when the error distribution is the standard Laplace, with the proviso, however, that the $L^1$-Wasserstein metric is not linked to the integrated quadratic risk between the c.d.f.'s used in the 
result we are going to mention, so that the rates are not comparable.
For instance, the estimator $G_n^{K}(h_n)(y):=\int_{-\infty}^yp_n^K(h_n)(u)\,\d u$, $y\in\mathbb{R}$, of the c.d.f. $G_0$
based on the standard deconvolution kernel density estimator  
is such that 
$\{\int_{-\infty}^{+\infty}E[G_n^{K}(h_n)(y)-G_0(y)]^2\,\d y\}^{1/2}=O(n^{-1/8})$ when 
no assumptions on $G_0$ are postulated, except for the existence of the first absolute moment, see (3.12) in Corollary 3.3 of Hall and Lahiri (2008), p. 2117.
On the other hand, a recent lower bound result, due to Dedecker \emph{et al}. (2015), Theorem 4.1, pp. 246--248, suggests that better rates are possible. 
For $M>0$ and $r\geq1$, let 
$\mathscr D(M,\,r)$ be the class of all probability measures $G$ on 
$(\mathbb{R},\,\mathscr{B}(\mathbb{R}))$ such that 
$\int_{-\infty}^{+\infty} |y|^r\,\d G(y)\leq M$. Let 
$f$ be the error density.  
Assume that there exist $\beta>0$ and $c>0$ such that, for every $\ell\in\{0,\,1,\,2\}$, it holds
$|\hat f^{(\ell)}(t)|\leq c(1+|t|)^{-\beta}$, $t\in\mathbb{R}$. 
Then, there exists a finite constant $C>0$ such that, for \emph{any} estimator ${\Hat{G}}_n$ (we warn the reader of the clash of notation with the symbol $\hat G_n$ previously used to denote the MLE of $G_0$),
$$\liminf_{n\to+\infty}n^{p/(2\beta+1)}\sup_{G\in \mathscr D(M,\,r)}EW_p^p({\Hat{G}}_n,\,G)>C.$$ For $p=1$ and the (standard) Laplace error distribution, this renders the lower bound $n^{-1/5}$, which is better than the leading term $n^{-1/8}$ of the upper bounds we have found, even if it is not said that either the Bayes' estimator or the MLE attains it.

Finally, a remark on the use of the term \vir{merging}. Even if this term 
is herein declined with a different meaning 
from that considered in Barron (1988), where merging is intended as the convergence to one of the ratio of the marginal likelihood to the joint 
density of the first $n$ observations, 
or from that in Diaconis and Freedman (1986), where
merging refers to the \vir{intersubjective agreement}, as more and more data become available, between two Bayesians with different prior opinions, the underlying idea is, in a broad sense, the same: different inferential procedures
become essentially indistinguishable for large sample sizes.


\section{Final remarks}\label{sec:finrmks}
In this note, we have studied rates of convergence for Bayes and maximum likelihood estimation of Laplace mixtures and for their $L^1$-Wasserstein deconvolution. 
The result on the convergence rate in the Hellinger metric for the MLE of Laplace mixtures is achieved taking a different approach from 
that adopted in Ghosal and van der Vaart (2001), which is
based on the $L^1$-metric entropy with bracketing  
of the set of densities under consideration and is difficult to apply in the present context, due to the non-analyticity of the Laplace density. 
Posterior contraction rates for Dirichlet-Laplace 
mixtures have been previously studied by Gao and van der Vaart (2016) in the case of compactly supported mixing distributions and have been here extended to mixing distributions with a possibly unbounded set of locations, this 
accounting for the derivation of more general entropy estimates, cf. Appendix B.
An interesting extension to pursue would be that of considering general kernel densities with polynomially decaying Fourier transforms in the sense of Definition \ref{def:algdecr}: indeed, in the proof of Proposition \ref{prop:1}, which gives an assessement of the posterior contraction rate in the $L^2$-metric for Dirichlet-Laplace mixtures, all conditions, except for the Kullback-Leibler prior mass requirement, hold for any kernel density as in Definition \ref{def:algdecr}, provided that $\beta>1$. The missing piece is an extension of Lemma 2 in Gao and van der Vaart (2016), pp. 615--616, which is preliminary for checking the Kullback-Leibler prior mass condition and guarantees that a Laplace mixture, with mixing distribution that is the re-normalized restriction of $G_0$ to a compact interval, can be approximated in the Hellinger metric by a Laplace mixture with a discrete mixing distribution having a sufficiently restricted number of support points. 
We believe that, as for the Laplace kernel, the number of support points of the approximating mixing distribution will ultimately depend only on the decay rate of the Fourier transform of the kernel density, even though, in a general proof, the explicit expression of the kernel density cannot be exploited as in the Laplace case. Extending the result on posterior contraction rates to general kernel mixtures would be of interest in itself and for extending the $L^1$-Wasserstein deconvolution result, even though this would pose in more general terms the rate-optimality question, as it happens for the $n^{-1/8}$-rate in the Laplace case, see the remarks at the end of Sect. \ref{sec:merging}.
We hope to report on these issues in a follow-up contribution.

\bigskip

\noindent{\small\bf{Acknowledgements}}\hspace*{0.3cm}The author
would like to thank the Editor and an anonymous Referee for their careful reading of the manuscript and helpful comments that have led to an improved presentation of the results.  
She gratefully acknowledges financial support from MIUR, grant n$^\circ$ 2015SNS29B \vir{Modern Bayesian nonparametric methods}.

\section*{Appendix A: Auxiliary results}
\begin{theopargself}
In this section, a sufficient condition on a convolution kernel $K\in L^1(\mathbb{R})$ is stated
in terms of its Fourier transform $\hat K$ so that the exact order of the $L^2$-norm error for approximating any probability density $f$, with polynomially decaying characteristic function $\hat f$ of degree $\beta>1/2$ 
(see  Definition \ref{def:algdecr} below)
by its convolution with $K_h:=h^{-1}K(\cdot/h)$, that is, by $f\ast K_h$, is assessed in terms of the bandwidth $h$. The
result is instrumental to the proof of Proposition \ref{prop:1} to show that any mixture density $p_G=G\ast f$, irrespective of the mixing distribution $G\in\mathscr G$, verifies the \emph{bias} condition $||p_G\ast K_h-p_G||_2=O(h^{\beta-1/2})$, which is involved in the definition of the sieve set in (15) of Theorem 2 in Gin\'{e} and Nickl (2011), p. 2891. We refer to the difference $(p_G\ast K_h-p_G)$ as the \emph{bias} because it is indeed the bias of the kernel density estimator $p_n^K(h):=\mathbb{P}_n\ast K_h$, when the observations are sampled from $p_G$: in fact, the bias $b[p_n^K(h)]:=E[p_n^K(h)]-p_G=p_G\ast K_h-p_G$. The condition in \eqref{eq:integrability} below, which traces back to Watson and Leadbetter (1963), see the first Theorem of Sect. 3B, pp. 486--487, is verified for any kernel $K$ of order $r$ greater than or equal to $\beta$, as later on spelled out in Remark \ref{rem:1}.

\begin{definition}\label{def:algdecr}
Let $f$ be a probability density function on
$\mathbb{R}$. The Fourier transform of $f$ or the characteristic function of the corresponding probability measure on $(\mathbb{R},\,\mathscr{B}(\mathbb{R}))$, denoted by $\hat f$,
is said to decrease algebraically of degree $\beta>0$ if there exists a constant $0<B_f<+\infty$ such that
\begin{equation}\label{eq:algebraic}
\lim_{|t|\rightarrow+\infty}|t|^\beta|\hat f(t)|=B_f.
\end{equation}
\end{definition}
Relationship \eqref{eq:algebraic} describes the tail behaviour of $|\hat f|$ by stating that it decays polynomially as $|t|^{-\beta}$. The class of probability measures on $(\mathbb{R},\,\mathscr{B}(\mathbb{R}))$ that have characteristic functions satisfying condition (\ref{eq:algebraic}) 
includes\\[-0.5cm]
\begin{itemize}
\item any gamma distribution with shape and scale parameters $\nu>0$ and $\lambda>0$, respectively,
whose characteristic function has expression $(1+ it/\lambda)^{-\nu}$, the role of $\beta$ in \eqref{eq:algebraic}
being played by $\nu$;\\[-0.33cm]
\item any distribution with characteristic function
$(1+|t|^\alpha)^{-1}$, $t\in\mathbb{R}$, for $0<\alpha \leq 2$,
which is called an $\alpha$-\emph{Laplace distribution} or \emph{Linnik's distribution}, cf. Devroye (1990);
the case $\alpha=2$ renders the characteristic function of a standard Laplace distribution. 
The role of $\beta$ in \eqref{eq:algebraic} is played by $\alpha$;\\[-0.33cm]
\item any distribution with characteristic function $(1+|t|^\alpha)^{-1/\beta}$, which, for $\beta=1$,
reduces to that of an $\alpha$-Laplace distribution. The exponent $\alpha/\beta$ plays the role of the polynomial's degree $\beta$ in \eqref{eq:algebraic}. Devroye (1990) observes that,
if $S_\alpha$ is any symmetric stable r.v. with characteristic function
$e^{-|t|^\alpha}$, $0<\alpha\leq2$, and $V_\beta$ is an independent r.v. with density
$e^{-v^\beta}/\Gamma(1+1/\beta)$, $v>0$, then the r.v. $S_\alpha V_\beta^{\beta/\alpha}$
has characteristic function $(1+|t|^\alpha)^{-1/\beta}$.  
\end{itemize}

\begin{lemma}\label{lem:1}
Let $f\in L^2(\mathbb{R})$ be a probability density function with Fourier transform $\hat f$ satisfying condition \eqref{eq:algebraic} for some $\beta>1/2$ and a constant $0<B_f<+\infty$. If $K
\in L^1(\mathbb{R})$
has Fourier transform $\hat K$ such that $\hat K(0)=1$ and
\begin{equation}\label{eq:integrability}
I^2_\beta[\hat K]:=\int_{\{t\neq0\}}\frac{|1-\hat K(t)|^2}{|t|^{2\beta}}\,\d t<+\infty,
\end{equation}
then $$h^{-2(\beta-1/2)}\|f-f\ast K_h \|_2^2\rightarrow \frac{1}{2\pi}\times
B^2_f\times
I^2_\beta[\hat K                     
] \quad\mbox{as } h
\rightarrow0.$$
\end{lemma}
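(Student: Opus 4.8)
The plan is to move to the Fourier side, rescale the frequency variable, and then apply dominated convergence. First I would note that, since $f\in L^2(\mathbb{R})$ and $K\in L^1(\mathbb{R})$, Young's inequality gives $f\ast K_h\in L^2(\mathbb{R})$, and $\widehat{K_h}(t)=\hat K(ht)$. With the convention $\hat g(t)=\int e^{itx}g(x)\,\d x$, under which $\|g\|_2^2=(2\pi)^{-1}\|\hat g\|_2^2$ and $\widehat{f\ast K_h}=\hat f\cdot\widehat{K_h}$, Plancherel's identity yields
$$\|f-f\ast K_h\|_2^2=\frac{1}{2\pi}\int_{\mathbb R}|\hat f(t)|^2\,|1-\hat K(ht)|^2\,\d t .$$
The change of variables $u=ht$ turns the right-hand side into $(2\pi)^{-1}h^{-1}\int_{\mathbb R}|\hat f(u/h)|^2\,|1-\hat K(u)|^2\,\d u$, so that, since $h^{-2(\beta-1/2)}h^{-1}=h^{-2\beta}$,
$$h^{-2(\beta-1/2)}\|f-f\ast K_h\|_2^2=\frac{1}{2\pi}\int_{\{u\neq0\}}\frac{|u/h|^{2\beta}\,|\hat f(u/h)|^2}{|u|^{2\beta}}\,|1-\hat K(u)|^2\,\d u .$$

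Next I would identify the pointwise limit of the integrand. For each fixed $u\neq0$ we have $|u/h|\to+\infty$ as $h\to0$, so condition \eqref{eq:algebraic} gives $|u/h|^{2\beta}|\hat f(u/h)|^2\to B_f^2$; hence the integrand converges to $B_f^2\,|1-\hat K(u)|^2/|u|^{2\beta}$ for every $u\neq0$.

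The key quantitative input for dominated convergence is the uniform bound $C:=\sup_{t\in\mathbb R}|t|^\beta|\hat f(t)|<+\infty$. This holds because $\hat f$ is continuous (as $f\in L^1$), $|t|^\beta|\hat f(t)|\to B_f$ as $|t|\to+\infty$ by \eqref{eq:algebraic}, and near the origin $|t|^\beta|\hat f(t)|\leq|t|^\beta$ since $|\hat f|\leq\|f\|_1=1$; so $t\mapsto|t|^\beta|\hat f(t)|$ is bounded. Consequently $|u/h|^{2\beta}|\hat f(u/h)|^2\leq C^2$ uniformly in $h$ and $u$, and the integrand is dominated by $C^2\,|1-\hat K(u)|^2/|u|^{2\beta}$, which is integrable on $\{u\neq0\}$ precisely because $I_\beta^2[\hat K]<+\infty$ by \eqref{eq:integrability}. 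Dominated convergence then gives
$$h^{-2(\beta-1/2)}\|f-f\ast K_h\|_2^2\;\longrightarrow\;\frac{1}{2\pi}\int_{\{u\neq0\}}\frac{B_f^2\,|1-\hat K(u)|^2}{|u|^{2\beta}}\,\d u=\frac{1}{2\pi}\,B_f^2\,I_\beta^2[\hat K],$$
which is the assertion. The only step calling for a little care is the global majorant $\sup_t|t|^\beta|\hat f(t)|<+\infty$, i.e.\ checking that the algebraic-decay hypothesis together with $|\hat f|\leq1$ near zero really produces an $L^2$-summable (against $|1-\hat K(u)|^2|u|^{-2\beta}$) dominating function uniformly in $h$; the rest is a routine combination of Plancherel's theorem, a linear change of variables, and dominated convergence.
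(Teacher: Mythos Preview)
Your proof is correct and follows essentially the same route as the paper: Plancherel's theorem, the change of variable $u=ht$, and dominated convergence using the integrability hypothesis \eqref{eq:integrability}. The only cosmetic difference is that the paper first separates off the limit $B_f^2\,I_\beta^2[\hat K]$ and then shows the remainder integral vanishes, whereas you apply DCT directly to the whole integrand; you are also more explicit than the paper in verifying the uniform bound $\sup_t|t|^\beta|\hat f(t)|<+\infty$ needed for the dominating function.
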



\begin{proof}
\smartqed
Since it is assumed that $f\in L^1(\mathbb{R})\cap L^2(\mathbb{R})$, then $\hat f\in L^2(\mathbb{R})$ and necessarily $\beta>1/2$. 
Also, as
$K\in L^1(\mathbb{R})$, then $\|f\ast K_h\|_p\leq \|f\|_p\|K_h\|_1<+\infty$
for $p=1,\,2$.
Thus, $(f-f\ast K_h)\in L^1(\mathbb{R})\cap L^2(\mathbb{R})$ and, by Plancherel's Theorem, $\|f-f\ast K_h\|_2^2=(2\pi)^{-1}\|\hat f-\hat f \times
\hat K_h\|_2^2$. By the change of variable $z=ht$,
\[\begin{split}
\|f-f\ast K_h\|_2^2&=
\frac{1}{2\pi}\int_{-\infty}^{+\infty}
|\hat f(t)|^2|1-\hat K(h t)|^2\,\d t\\
&=\frac{1}{2\pi}h^{2(\beta-1/2)}
\Bigg\{B_f^2 \times I^2_\beta[\hat K]
+
\int_{\{z\neq0\}}\frac{|1-\hat K(z)|^2}{|z|^{2\beta}}\Big
[|z/h|^{2\beta}|\hat f(z/h)|^2-B_f^2\Big]
\,\d z\Bigg\},
\end{split}
\]
where, for every sequence of positive real numbers $h_n\rightarrow 0$, the integral on the right-hand side of the last display tends to zero by the dominated convergence theorem
due to assumption (\ref{eq:integrability}). The assertion follows. 
\qed
\end{proof}

In the following remark, which is essentially due to Davis~(1977), cf. Sect. 3, pp.~532--533, sufficient conditions on a kernel $K\in L^1(\mathbb{R})$ are given so that $\hat K(0)=1$ and the requirement in \eqref{eq:integrability} is satisfied.
The conditions in \eqref{eq:3} below require that $K$ is a \emph{kernel of order $r\geq\beta>1/2$}, the order of a kernel being the first non-zero \vir{moment} of the kernel,
cf. Definition 1.3 in Tsybakov (2004), p. 5. 

\begin{remark}\label{rem:1}
For $K\in L^1(\mathbb{R})$, the Fourier transform $\hat K$ is continuous and bounded so that the integral $\int_{-\infty}^{+\infty}|t|^{-2\beta}|1-\hat K(t)|^2\1_{[1,\,+\infty)}(|t|)\,\d t<+\infty$ for $\beta>1/2$. The problem with condition \eqref{eq:integrability} is therefore the integrability of the function $t\mapsto |t|^{-2\beta}|1-\hat K(t)|^2$ for $|t|\in(0,\,1
)$.
Suppose that
\begin{eqnarray}\label{eq:3}
&&
\hspace*{-0.5cm}\int_{-\infty}^{+\infty} K(x)\,\d x=1, \nonumber\\
&&\hspace*{-0.5cm}
\mbox{$\exists\,r\in\mathbb{N}$, $r\geq\beta>\frac{1}{2}$\,:}\int_{-\infty}^{+\infty} x^j K(x)\,\d x=0\,\,\mbox{ for $j=1,\,\ldots,\,r-1$\,\, only if\,\, $r\geq2$,}\nonumber\\ 
&&\hspace*{-0.7cm}\mbox{and }\hspace*{5cm} \int_{-\infty}^{+\infty} x^r K(x)\,\d x\neq 0
\end{eqnarray}
and
\begin{equation}\label{eq:45}
\int_{-\infty}^{+\infty} |x|^r |K(x)|\,\d x<+\infty,
\end{equation}
(the value $r$ being called the \emph{characteristic exponent} of $\hat K$, see Parzen (1962), pp. 1072--1073), then 
\[\hat K(0)=1 \,\,\mbox{ and }\,\, \int_{-\infty}^{+\infty}|t|^{-2\beta}|1-\hat K(t)|^2\1_{(0,\,1)}(|t|)\,\d t<+\infty.\] 
In fact, $\hat K(0)=\int_{-\infty}^{+\infty}K(x)\,\d x=1$. Also, for every real number $t\neq0$,
\[\begin{split}
\frac{1-\hat K(t)}{t^{r}} = - \frac{\hat K(t)-1}{t^{r}}
&=-\frac{1}{t^{r}}\int_{-\infty}^{+\infty} (e^{itx}-1)K(x)\, \d x\\
&=- \frac{1}{t^r} \int_{-\infty}^{+\infty} \Bigg[e^{itx}-\sum_{j=0}^{r-1}\frac{(itx)^j}{j!}\Bigg]K(x)\, \d x\\
&= - \frac{i^r}{(r-1)!}\int_{-\infty}^{+\infty}  x^r K(x)\int_0^1(1-u)^{r-1}e^{itux}\,\d u\,\d x.
\end{split}
\]
By the dominated convergence theorem, condition \eqref{eq:45} implies that
\[\frac{1-\hat K(t)}{t^r}\rightarrow
-\frac{i^r}{r!}\int_{-\infty}^{+\infty} x^r K(x)\,\d x \quad\mbox{as $t\rightarrow0$,}\]
where the limit is non-zero in virtue of the last condition on the right-hand side of \eqref{eq:3}. 
It is seen by comparison that, since $r\geq\beta$, the integral $\int_{-\infty}^{+\infty}|t|^{-2\beta}|1-\hat K(t)|^2\1_{(0,\,1)}(|t|)\,\d t<+\infty$  and condition \eqref{eq:integrability} is satisfied. If, for instance, $1/2<\beta\leq 2$, then any symmetric probability density $K$ on $\mathbb{R}$, with finite, non-zero second moment $\mu_2:=
\int_{-\infty}^{+\infty} x^2 K(x)\,\d x\neq 0$ is such that
$I^2_\beta[\hat K]<+\infty$.
\end{remark}
\end{theopargself}


\section*{Appendix B: Entropy estimates}
\begin{theopargself}
In this section, Hellinger and $L^1$-metric entropy estimates
for a class of Laplace mixture densities, with mixing distributions having tails dominated by a given decreasing function, are provided.
The result of Lemma \ref{lem:entropy} extends, along the lines of Theorem 7 in Ghosal and van der Vaart (2007), pp. 708--709, Proposition 2 of Gao and van der Vaart (2016), p. 617, which deals with Laplace mixtures having compactly supported mixing distributions. Lemma \ref{lem:entropy} is invoked in the proof of Proposition \ref{prop:2}, reported in Appendix C, to verify that the entropy condition is satisfied.

\begin{lemma}\label{lem:entropy} 
For a given decreasing function $A:\,(0,\,+\infty)\rightarrow[0,\,1]$, with inverse $A^{-
1}$, define the class of Laplace mixture densities
\begin{equation*}\label{eq:set}
\mathscr P_A:=\{p_G:\,G([-a,\,a]^c)\leq A(a) \,\mbox{ for all } a>0\}.
\end{equation*}
Then, for every $0<\varepsilon<1$,
\begin{itemize}
\item taking $a\equiv a_\varepsilon:=A^{-1}(\varepsilon)$ in the definition of $\mathscr P_A$, we have
\begin{equation}\label{eq:entropyL1}
\mbox{ }
\log N(3\varepsilon,\,\mathscr P_A,\,||\cdot||_1)\lesssim \varepsilon^{-2/3}\log \frac{A^{-1}(\varepsilon)}{\varepsilon^2},
\end{equation}
\item taking $a\equiv a_{\varepsilon^2}:=A^{-1}(\varepsilon^2)$ in the definition of $\mathscr P_A$, we have
\begin{equation}\label{eq:entropyHel}
\log N((\sqrt{2}+1)\varepsilon,\,\mathscr P_A,\,h)
\lesssim \varepsilon^{-2/3}\log \frac{A^{-1}(\varepsilon^2)}{\varepsilon^2}.
\end{equation}
\end{itemize}
\end{lemma}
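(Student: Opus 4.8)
\emph{(Proof plan.)} My plan is to reduce the tail-dominated class $\mathscr P_A$ to the class of Laplace mixtures supported on a \emph{compact} interval, for which the estimate is Proposition~2 of Gao and van der Vaart (2016), and to charge the truncation error to the stated radii, truncating at the two different levels $a_\varepsilon=A^{-1}(\varepsilon)$ and $a_{\varepsilon^2}=A^{-1}(\varepsilon^2)$ for the $L^1$ and the Hellinger statements. This is the device by which Theorem~7 of Ghosal and van der Vaart (2007) lifts an entropy bound from compactly supported to tail-dominated location mixtures.

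\emph{Step 1 (truncation).} Fix $0<\varepsilon<1$; I may assume $A^{-1}(\varepsilon)<+\infty$, resp. $A^{-1}(\varepsilon^2)<+\infty$, since otherwise the corresponding bound is vacuous, and then $A(a)<1$ for $a\in\{a_\varepsilon,\,a_{\varepsilon^2}\}$. Given $p_G\in\mathscr P_A$ and such an $a$, set $w:=G([-a,\,a])\geq 1-A(a)>0$ and let $G_1$, $G_2$ be the renormalized restrictions of $G$ to $[-a,\,a]$ and to $[-a,\,a]^c$, so that $G=wG_1+(1-w)G_2$ and, by linearity of $G\mapsto p_G$, $p_G-p_{G_1}=(1-w)(p_{G_2}-p_{G_1})$; hence
$$\|p_G-p_{G_1}\|_1\leq(1-w)\,\|p_{G_2}-p_{G_1}\|_1\leq 2(1-w)\leq 2A(a),$$
and, by \eqref{eq:Hel_L1}, $h(p_G,\,p_{G_1})\leq\|p_G-p_{G_1}\|_1^{1/2}\leq\sqrt{2A(a)}$. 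Since $p_{G_1}$ lies in $\mathscr P_{[-a,a]}:=\{p_G:\ \mathrm{supp}(G)\subseteq[-a,\,a]\}$, taking $a=a_\varepsilon$ (so $A(a_\varepsilon)\leq\varepsilon$) every element of $\mathscr P_A$ is within $L^1$-distance $2\varepsilon$ of $\mathscr P_{[-a_\varepsilon,a_\varepsilon]}$, and taking $a=a_{\varepsilon^2}$ (so $A(a_{\varepsilon^2})\leq\varepsilon^2$) every element of $\mathscr P_A$ is within Hellinger distance $\sqrt2\,\varepsilon$ of $\mathscr P_{[-a_{\varepsilon^2},a_{\varepsilon^2}]}$. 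Composing an $\varepsilon$-net of the compactly supported class with these approximations via the triangle inequality gives
$$N(3\varepsilon,\,\mathscr P_A,\,\|\cdot\|_1)\leq N(\varepsilon,\,\mathscr P_{[-a_\varepsilon,a_\varepsilon]},\,\|\cdot\|_1),\qquad N((\sqrt2+1)\varepsilon,\,\mathscr P_A,\,h)\leq N(\varepsilon,\,\mathscr P_{[-a_{\varepsilon^2},a_{\varepsilon^2}]},\,h).$$

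\emph{Step 2 (compact support) and conclusion.} It then remains to quote, for $a>0$ and $d$ either $\|\cdot\|_1$ or $h$, the bound
$$\log N(\varepsilon,\,\mathscr P_{[-a,a]},\,d)\ \lesssim\ \varepsilon^{-2/3}\log\frac{a}{\varepsilon},$$
which is Proposition~2 of Gao and van der Vaart (2016) with the dependence on the interval length made explicit: $\mathscr P_{[-a,a]}$ is the closed convex hull of the one-parameter family of translated Laplace kernels $\{f(\cdot-y):\,y\in[-a,\,a]\}$, a compactly supported mixing distribution can be approximated (in $L^1$ and in Hellinger) by a discrete one with few atoms --- a number controlled solely by the degree-$2$ algebraic decay of $\hat f(t)=(1+t^2)^{-1}$ and not by $a$ (Lemma~2 of Gao and van der Vaart (2016)) --- whence the exponent $2/3$, and $a$ enters only logarithmically, through the count of the discretized atomic mixing distributions on a grid of $[-a,\,a]$, giving the factor $\log(a/\varepsilon)$. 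Since $\log(a/\varepsilon)\leq\log(a/\varepsilon^2)$ for $0<\varepsilon<1$, substituting $a=a_\varepsilon=A^{-1}(\varepsilon)$ and combining with the first inequality of Step~1 yields \eqref{eq:entropyL1}, while substituting $a=a_{\varepsilon^2}=A^{-1}(\varepsilon^2)$ and combining with the second inequality of Step~1 yields \eqref{eq:entropyHel}.

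\emph{Where the difficulty sits.} The truncation in Step~1 is routine once one observes that the two statements must be truncated at different levels --- $A^{-1}(\varepsilon)$ for $L^1$ and $A^{-1}(\varepsilon^2)$ for Hellinger --- precisely because $h\leq\|\cdot\|_1^{1/2}$ turns an $L^1$ tail error of size $\varepsilon^2$ into a Hellinger tail error of size $\varepsilon$. The real content is Step~2: one needs the compact-support entropy bound with its dependence on the interval length entering \emph{only} through $\log a$, so that an arbitrary, possibly rapidly diverging, $A^{-1}(\varepsilon)$ is harmless. For the Hellinger statement this forbids deducing the bound from an $L^2$ entropy estimate, since a uniform lower bound for Laplace mixtures over the effective window is only of order $e^{-2a}$, and the ensuing conversion would inflate the estimate by a factor exponential in $a$; one must instead argue directly in the Hellinger metric through the discrete-approximation argument, whose validity with an atom count fixed by the Fourier decay alone and not by $a$ is the crux inherited from Gao and van der Vaart (2016).
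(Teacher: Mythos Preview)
Your proposal is correct and follows essentially the same route as the paper: truncate the mixing distribution to a compact interval at level $a_\varepsilon=A^{-1}(\varepsilon)$ (for $L^1$) or $a_{\varepsilon^2}=A^{-1}(\varepsilon^2)$ (for Hellinger), control the truncation error via $\|p_G-p_{G^\ast}\|_1\leq 2G([-a,a]^c)$ and $h^2\leq\|\cdot\|_1$, and then invoke Proposition~2 of Gao and van der Vaart (2016) for the compactly supported class. The only cosmetic difference is that, for the $L^1$ bound, the paper takes a Hellinger $(\varepsilon/2)$-net of $\mathscr P_{[-a_\varepsilon,a_\varepsilon]}$ and converts it to an $L^1$ $\varepsilon$-net via $\|\cdot\|_1\leq 2h$, whereas you quote the $L^1$ entropy of the compact class directly; since Proposition~2 is stated for the Hellinger metric, your claim that it also holds for $\|\cdot\|_1$ is justified by the same inequality, so the two arguments are equivalent.
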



\begin{proof}
Concerning the $L^1$-metric entropy in \eqref{eq:entropyL1}, since $a\equiv a_\varepsilon:=A^{-1}(\varepsilon)$ satisfies $G([-a_\varepsilon,\,a_\varepsilon]^c)\leq A(a_\varepsilon)=\varepsilon$
for all $G$ as in the definition of $\mathscr P_A$, Lemma A.3 of Ghosal and van der Vaart (2001), p. 1261, implies that
the $L^1$-distance between any density $p_G\in \mathscr P_A$ and the corresponding density $p_{G^\ast}$, with mixing distribution $G^\ast$ defined as the re-normalized restriction of $G$ to
$[-a_\varepsilon,\,a_\varepsilon]$, is bounded above by $2\varepsilon$.
Then, in virtue of the inequality in \eqref{eq:L1_Hel},
a Hellinger $(\varepsilon/2)$-net over the class of densities $\mathscr P_{a_\varepsilon}:=\{p_G:\,G([-a_\varepsilon,\,a_\varepsilon])=1\}$ is an $L^1$-metric $3\varepsilon$-net over $\mathscr P_A$, where
$$\log N\big(\varepsilon/2,\,\mathscr P_{a_\varepsilon},\,h\big)\lesssim \varepsilon^{-2/3}\log \frac{a_\varepsilon}{\varepsilon^2}$$ by Proposition 2 of Gao and van der Vaart (2016), p. 617.
The inequality in \eqref{eq:entropyL1} follows.

Concerning the Hellinger-metric entropy in \eqref{eq:entropyHel}, by taking
$a\equiv a_{\varepsilon^2}:=A^{-1}(\varepsilon^2)$,
for every $p_G\in \mathscr P_A$ and the corresponding $p_{G^\ast}$, with mixing distribution $G^\ast$ defined as the re-normalized restriction of $G$ to
$[-a_{\varepsilon^2},\,a_{\varepsilon^2}]$, by the inequality in \eqref{eq:Hel_L1}, we have
$h^2(p_G,\,p_{G^\ast})\leq ||p_G-p_{G^\ast}||_1\leq 2G([-a_{\varepsilon^2},\,a_{\varepsilon^2}]^c)\leq 2\varepsilon^2$,
which implies that $h(p_G,\,p_{G^\ast})\leq \sqrt{2}\varepsilon$. Thus, a Hellinger $\varepsilon$-net
over $\mathscr P_{a_{\varepsilon^2}}:=\{p_G:\,G([-a_{\varepsilon^2},\,a_{\varepsilon^2}])=1\}$ is a $(\sqrt{2}+1)\varepsilon$-net over $\mathscr P_A$, where
$$\log N\big(\varepsilon,\,\mathscr P_{a_{\varepsilon^2}},\,h\big)\lesssim \varepsilon^{-2/3}\log \frac{a_{\varepsilon^2}}{\varepsilon^2}$$
again by Proposition 2 of Gao and van der Vaart (2016), p. 617. The inequality in \eqref{eq:entropyHel} follows.
\qed
\end{proof}
\end{theopargself}


\section*{Appendix C: Posterior contraction rates in $L^r$-metrics, $1\leq r\leq 2$, for Dirichlet-Laplace mixtures}
\label{appendix:rates}
\begin{theopargself}
In this section, we prove Proposition \ref{prop:2} and Proposition \ref{prop:1} of Sect. \ref{sec:Bayes} on contraction rates in the $L^1$ and $L^2$-metrics, respectively, for
the posterior distribution corresponding to a Dirichlet process mixture of Laplace densities.

\medskip

\noindent\emph{Proof of Proposition \ref{prop:2}}
In order to derive the Hellinger or the $L^1$-metric posterior contraction rate,
we can appeal to Theorem 2.1 of Ghosal \emph{et al}. 
(2000), p. 503, or Theorem 2.1 of Ghosal and van der Vaart (2001), p. 1239.
We define a sieve set for which conditions (2.2) or (2.8) and (2.3) or (2.9), postulated in the aforementioned theorems, are satisfied.
To the aim, once recalled that $\alpha(\mathbb{R})<+\infty$, let $\bar\alpha:=\alpha/\alpha(\mathbb{R})$ be the
probability measure corresponding to the baseline measure $\alpha$ 
of the Dirichlet process.
Consistently with the notation adopted throughout, $\bar\alpha$ is also used 
to denote the
corresponding cumulative distribution function.
By a result of Doss and Sellke (1982), p. 1304, which concerns the tails of probability measures chosen from a Dirichlet prior, we have that, for almost every sample distribution $G$, if $a>0$ is large enough so that $\bar\alpha(-a)=1-\bar\alpha(a)$ is sufficiently small,
then
\[
\begin{split}
G([-a,\,a]^c)&\leq G(-a)+1-G(a)\\
&\leq \exp{\bigg\{-\frac{1}{\bar\alpha(-a)|\log \bar\alpha(-a)|^2}\bigg\}}+
\exp{\bigg\{-\frac{1}{[1-\bar\alpha(a)]|\log[1-\bar\alpha(a)]|^2}\bigg\}}\\
&=2\exp{\bigg\{-\frac{1}{\bar\alpha(-a)\,|\log \bar\alpha(-a)|^2}\bigg\}}\\
&< A_\eta(a),
\end{split}
\]
having set the position $A_\eta(a):=2\exp{\{-[\bar\alpha(-a)]^{-\eta}\}}$
for some fixed $0<\eta<1$. The inverse function $A_\eta^{-1}:\,(0,\,1)\rightarrow (0,\,+\infty)$
is defined
as $A^{-1}_\eta:\,u\mapsto -\bar\alpha^{-1}(\log^{-1/\eta}(2/u))$,
where the function $\bar\alpha^{-1}(\cdot)$ is the left-continuous inverse of $\bar\alpha(\cdot)$, that is, $\bar\alpha^{-1}(u):=\inf\{y\in\mathbb{R}:\,\bar\alpha(y)\geq u\}$, $u\in(0,\,1)$.
Considered the class of densities
$\mathscr P_{A_\eta}:=\{p_G:\,G([-a,\,a]^c)\leq A_\eta(a) \,\mbox{ for all } a>0\}$,
we have
$\Pi(\mathscr P_{A_\eta})=1$.
For any sequence of positive real numbers $\bar\varepsilon_n\downarrow0$,
set the position $a\equiv a_{\bar\varepsilon_n}:=A_\eta^{-1}(\bar\varepsilon_n)$ 
and defined the sieve set
$\mathscr P_n:=\{p_G:\,G([-a_{\bar\varepsilon_n},\,a_{\bar\varepsilon_n}]^c)\leq A_\eta(a_{\bar\varepsilon_n})=\bar\varepsilon_n\}$, we have $$\Pi(\mathscr P\setminus \mathscr P_n)=0$$ and condition (2.3) or (2.9) is satisfied.
As for condition (2.2) or (2.8), taking $\bar\varepsilon_n=n^{-3/8}\log^{3/8}n$, by Lemma \ref{lem:entropy}, we have
\begin{equation}\label{eq:entropy bound}
\log D(\bar\varepsilon_n,\,\mathscr P_n,\, ||\cdot||_1)\leq
\log N(\bar\varepsilon_n/2,\,\mathscr P_n,\, ||\cdot||_1)\lesssim
(\bar\varepsilon_n)^{-2/3}\log \frac{A_\eta^{-1}(\bar\varepsilon_n/6)}{\bar\varepsilon_n^2}\lesssim n\bar\varepsilon_n^2.
\end{equation}
The same bound as in \eqref{eq:entropy bound} also holds for the Hellinger metric entropy.
The Kullback-Leibler prior mass condition (2.4) of Theorem 2.1 of Ghosal \emph{et al}.
(2000), p. 503, or, equivalently, condition (2.10) of Theorem 2.1 of Ghosal and van der Vaart (2001),
p. 1239, can be seen to be satisfied for $\tilde\varepsilon_n:=n^{-3/8}\log^{5/8}n$. For the verification of this condition, we refer the reader to 
condition (2) of Proposition \ref{prop:1} below,
whose requirement \eqref{eq:tailG11} is satisfied
under assumption \eqref{eq:tailG1} of Proposition \ref{prop:2}. 
The proof is completed by taking
$\varepsilon_n:=\max\{\bar\varepsilon_n,\,\tilde\varepsilon_n\}=n^{-3/8}\log^{5/8}n$. For the sake of clarity, we remark that the role of $\tilde\varepsilon_n$ is played by $\varepsilon_n$ in the proof of Proposition 
\ref{prop:1}.
\qed

\medskip

We now prove Proposition \ref{prop:1} on the posterior contraction rate in the $L^2$-metric.
The result relies on Theorem 3 of Gin\'{e} and Nickl (2011), p. 2892, which gives sufficient
conditions for deriving posterior contraction rates in $L^r$-metrics, $1<r<+\infty$. All assumptions of Theorem 3, except for condition (2), are shown to be satisfied for any kernel density $f$ as in Definition \ref{def:algdecr} with $\beta>1$.
This includes the (standard) Laplace kernel density as a special case when $\beta=2$.
Condition (2), which requires the prior mass in Kullback-Leibler type neighborhoods of the sampling density
$p_0\equiv p_{G_0}=G_0\ast f$ to be not exponentially small, relies on a preliminary approximation result of the density $p_{G_0^*}=G_0^*\ast f$, with mixing distribution $G_0^*$ obtained as
the re-normalized restriction of $G_0$ to a compact interval, 
by a mixture density that has a discrete mixing distribution with a sufficiently restricted number of support points.
This result is known to hold for the Laplace kernel density in virtue of
Lemma 2 of Gao and van der Vaart (2016), pp. 615--616.

\medskip

\noindent\emph{Proof of Proposition \ref{prop:1}}
We apply Theorem 3 of Gin\'{e} and Nickl (2011), p. 2892, with $r=2$.
We refer to the conditions of this theorem using the same letters/numbers 
as in the original article.
Let $\gamma_n\equiv 1$ and
$\delta_n\equiv\varepsilon_n:=n^{-3/8}\log^{5/8}n$, $n\in\mathbb{N}$. 
\begin{itemize}
\item \emph{Verification of condition} (b)\\[2pt]
Condition (b), which requires that $\varepsilon_n^2=O(n^{-1/2})$, is satisfied in 
the general case for $\varepsilon_n=n^{-(\beta-1/2)/2\beta}\log ^\kappa n$, with some $\kappa >0$
and $\beta>1$.\\
\item \emph{Verification of condition} (1)\\[2pt]
Condition (1) requires that the prior probability of the complement of a sieve set $\mathscr P_n$ is exponentially small.
We show that, in the present setting, the prior probability of a sieve set $\mathscr P_n$, chosen
as prescribed by (15) in Theorem 2 of Gin\'{e} and Nickl (2011), p. 2891, is equal to zero.
Let $J_n$ be any sequence of positive real numbers satisfying
$2^{J_n}\leq c n\varepsilon_n^2
$ for some fixed constant $0<c<+\infty$.
Let $K$ be a convolution kernel such that it is of bounded $p$-variation for some finite real number $p\geq1$, right (or left) continuous and satisfies $||K||_\infty<+\infty$,
$\int_{-\infty}^{+\infty}(1+|z|)^w|K(z)|\,\d z<+\infty$ for some $w>2$, $\hat K(0)=1$ and
$I^2_\beta[\hat K
]<+\infty$, cf. condition \eqref{eq:integrability} in Lemma \ref{lem:1}.
Defined the sieve set
$$\mathscr P_n:=\big\{p_G\in\mathscr P:\,||p_G\ast K_{2^{-J_n}}-p_G||_2\leq C\delta
_n\big
\},$$
where $K_{2^{-J_n}}(\cdot):=2^{J_n}K(\cdot2^{J_n})$ and $C>0$ is a finite constant depending only
on $K$ and $f$, we have
\begin{equation*}\label{eq:sieveprob}
\Pi(\mathscr P\setminus \mathscr P_n)=0\quad\mbox{for all $n\in\mathbb{N}$.}
\end{equation*}
In fact, for every $G\in \mathscr G$,
by Plancherel's Theorem,
$||p_G\ast K_{2^{-J_n}}-p_G||_2^2=||p_G-p_G\ast K_{2^{-J_n}}||_2^2=(2\pi)^{-1}
||\hat p_G- \hat p_G\times \hat K
_{2^{-J_n}}||_2^2\leq (2\pi)^{-1}||\hat f-
\hat f \times\hat K
_{2^{-J_n}}||_2^2$
and, by Lemma \ref{lem:1}, $||\hat f-
\hat f \times\hat K
_{2^{-J_n}}||_2^2
\sim (2^{-J_n})^{2\beta-1} \times B_f^2\times I_\beta^2[\hat K]$,
where, for $\beta=2$, we have $(2^{-J_n})^{2\beta-1}=(2^{-J_n})^{3}=
O(\delta_n^2)$.
Thus,
\begin{equation}\label{eq:sieve}
\forall\,G\in\mathscr G,\,\,\,
||p_G\ast K_{2^{-J_n}}-p_G||_2=O(\delta_n)
\end{equation}
and condition (1) is verified. Relationship
\eqref{eq:sieve} holds, in particular, for $p_0\equiv p_{G_0}=G_0\ast f$.
Furthermore, $p_0\in L^2(\mathbb{R})$ if $f\in L^2(\mathbb{R})$, which is the case for the (standard) Laplace kernel density, because
$||p_0||_2^2=(2\pi)^{-1}||\hat p_0||_2^2
\leq (2\pi)^{-1}||\hat f ||_2^2= ||f ||_2^2<+\infty$. \\

\item \emph{Verification of condition} (2)\\[2pt]
Condition (2) requires that, for some finite constant $C_1>0$, the prior probability of Kullback-Leibler type
neighborhoods of $P_0$ of radius $\varepsilon_n^2$ is at least $\exp{(-C_1 n\varepsilon_n^2)}$, that is,
$\Pi(B_{\textrm{KL}}(P_0;\,\varepsilon_n^2)
)\gtrsim \exp{(-C_1 n\varepsilon_n^2)}$.
Fix $0<\varepsilon\leq (1-e^{-1})/\sqrt{2}$ and
let $a_\varepsilon:=A_0^{-1}(\varepsilon^2)$, where $A_0^{-1}$ is the inverse of the function $A_0$
in condition \eqref{eq:tailG0SS}.
Define $G_0^\ast$ as the re-normalized restriction of $G_0$ to $[-a_\varepsilon,\,a_
\varepsilon]$. By Lemma A.3 of Ghosal and van der Vaart (2001), p. 1261,
and assumption \eqref{eq:tailG0SS}, we have
$||p_{G_0}-p_{G_0^\ast}||_1\leq2 G_0([-a_\varepsilon,\,a_
\varepsilon]^c)\lesssim \varepsilon^2$. From the 
inequality in \eqref{eq:Hel_L1},
$h^2(p_{G_0},\,p_{G_0^\ast})\leq ||p_{G_0}-p_{G_0^\ast}||_1\lesssim \varepsilon^2$, 
whence $h(p_{G_0},\,p_{G_0^\ast})\lesssim\varepsilon$.
It is known from Lemma 2 of Gao and van der Vaart (2016), pp. 615--616,
that there exists a discrete distribution $G_0'$ such that $h(p_{G_0'},\,p_{G_0^\ast})\lesssim \varepsilon$. The distribution $G_0'$ has at most $N\asymp \varepsilon^{-2/3}$ support points $y_1,\,\ldots,\,y_N$ in $[-a_\varepsilon,\,a_\varepsilon]$, which we may assume to be at least $2\varepsilon^2$-separated. If not, we can take a maximal $2\varepsilon^2$-separated set in the support points of $G_0'$ and replace $G_0'$  with the discrete
distribution $G_0''$ obtained by relocating the masses of $G_0'$ to the nearest points of the $2\varepsilon^2$-net. Then, 
$h^2(p_{G_0'},\,p_{G_0''}
)\lesssim \max_{1\leq j
\leq N}|y_j'-y_j''|\lesssim \varepsilon^2$,
as shown in Proposition 2 of Gao and van der Vaart (2016), p. 617.
Let $G_0'=\sum_{j=1}^Np_j\delta_{y_j}$, with $|y_j-y_k|\geq2\varepsilon^2$ for all $1\leq j\neq k\leq N$. For any distribution $G$ 
such that
\begin{equation}\label{eq:condmixing}
\sum_{j=1}^N|G([y_j-\varepsilon^2,\,y
_j+\varepsilon^2])-p_j|\leq \varepsilon^2,
\end{equation}
we have
$||p_G-p_{G_0'}||_1\lesssim \varepsilon^2$ 
by Lemma 5 of Gao and van der Vaart (2016), p. 620.
Thus,
\[\begin{split}
h^2(p_G,\,p_{G_0}) &\lesssim
h^2(p_G,\,p_{G_0'}) + h^2(p_{G_0'},\,p_{G_0^\ast}) + h^2(p_{G_0^\ast},\,p_{G_0})\\
&\lesssim
||p_G-p_{G_0'}||_1 + \varepsilon^2 + ||p_{G_0^\ast}-p_{G_0}||_1 \lesssim \varepsilon^2.
\end{split}\]
We can now invoke Lemma A.10 in Scricciolo (2011), p. 305, taking into account Remark A.3 of the same article. To this aim, note that, if $G$ satisfies \eqref{eq:condmixing}, then $G([-(a_\varepsilon+1),\,(a_\varepsilon+1)])>1/2$. The reader may also refer to Scricciolo (2014), p. 305. 
For any $G\in\mathscr G$, let $P_G$ stand for the probability measure with density $p_G\in\mathscr P$. The inclusion
\[\bigg\{P_G:\,\sum_{j=1}^N|G([y_j-\varepsilon^2,\,y_j+\varepsilon^2])-p_j|\leq \varepsilon^2\bigg
\}\subseteq B_{\textrm{KL}}\big (P_0;\,\varepsilon^2\log^2(1/\varepsilon)\big
)\]
holds. To apply Lemma A.2 of Ghosal and van der Vaart (2001), p. 1260, note that,
for every $y_j$, $1\leq j\leq N$, we have $\alpha([y_j-\varepsilon^2,\,y_j+\varepsilon^2])\gtrsim\varepsilon ^{b'}$ for some finite constant $b'>0$. Thus,
\[\log\Pi(B_{\textrm{KL}}(P_0;\,\varepsilon^2\log^2(1/\varepsilon)))
\gtrsim -N\log(1/ \varepsilon) \asymp -\varepsilon^{-2/3}\log(1/\varepsilon).\]
Taking $\varepsilon_n:=\varepsilon\log(1/\varepsilon)$, we have
$
\Pi(B_{\textrm{KL}}(P_0;\,\varepsilon_n^2)
)\gtrsim \exp{(-C_1 n\varepsilon_n^2)}
$ 
and condition (2) is satisfied.\\

\item \emph{Verification of condition} (3)\\[2pt]
Condition (3) requires that there exists a finite constant $B>0$ such that
$\Pi(||p_G||_\infty>B\mid \Data)=o_{\mathbf{P}}(1)$.
If $||f||_\infty<+\infty$, then
$||p_G||_\infty\leq ||f||_\infty<+\infty$ for all $G\in\mathscr G$, see Lemma \ref{lem:l2hel}. In particular, $||p_0||_\infty=||p_{G_0}||_\infty\leq ||f||_\infty<+\infty$.
Taking $B:=||f||_\infty$, we have
$$\forall\,n\in\mathbb{N},\,\,\,
\Pi(||p_G||_\infty>B\mid \Data)=0\quad P_0^n\mbox{-almost surely},$$
and condition (3) is satisfied. For the (standard) Laplace kernel density, $||f||_\infty=1/2$.
\end{itemize}
The proof is thus complete and assertion \eqref{eq:l2norm} follows.
\qed
\end{theopargself}


\section*{Appendix D: Inversion inequalities}\label{appendix:wasserstein}
\begin{theopargself}
In this section, we state a result relating, for every real number $p\geq1$, the $L^p$-Wasserstein distance  
between any pair of mixing distributions $G,\,G'\in\mathscr{G}$ to the $L^2$-distance between the corresponding mixed densities $p_G=G\ast f$ and $p_{G'}=G'\ast f$, with a kernel density $f$ that is ordinary smooth in the sense of condition \eqref{eq:ft} stated below. Lemma \ref{lem:2} extends Lemma 7 of Gao and van der Vaart (2016), pp. 621--622, beyond the case of compactly supported mixing distributions to mixing distributions with finite moment generating functions on some neighborhood of zero $(-s_0,\,s_0)$, with $0<s_0<1$. If, furthermore, the kernel density is bounded, $||f||_\infty<+\infty$, then the inversion inequality in \eqref{eq:wasserstein} below also holds for the Hellinger metric in virtue of the following known result, which is reported for the reader's convenience.

\begin{lemma}\label{lem:l2hel}
For a given kernel density $f$, let $p_G=G\ast f$, with $G\in\mathscr G$.
If $||f||_\infty<+\infty$, then 
$$\forall\,G\in\mathscr G,\,\,\, p_G(x)
\leq ||f||_\infty\quad\mbox{for all $x\in\mathbb{R}$,}$$
and
\begin{equation}\label{eq:hel^2}
\forall\,G,\,G'\in\mathscr G,\,\,\, ||p_G-p_{G'}||_2^2\leq 4||f||_\infty h^2(p_G,\,p_{G'}).
\end{equation}
\end{lemma}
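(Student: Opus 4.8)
The plan is to establish the two assertions in sequence, the pointwise bound being elementary and feeding directly into the inversion inequality; this is a classical fact, reproduced only for completeness.

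First I would prove the sup-norm bound on $p_G$. Writing $p_G(x)=\int_{\mathscr Y}f(x-y)\,\d G(y)$ and estimating the integrand uniformly by $||f||_\infty<+\infty$, the fact that $G$ is a probability measure gives $p_G(x)\leq ||f||_\infty\int_{\mathscr Y}\d G(y)=||f||_\infty$ for every $x\in\mathbb{R}$ and every $G\in\mathscr G$. In particular, $p_G+p_{G'}\leq 2||f||_\infty$ pointwise on $\mathbb{R}$.

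For the inversion between the $L^2$-norm and the Hellinger distance, the key is the algebraic identity $a^2-b^2=(a-b)(a+b)$ applied to the square-root densities. Writing $p_G-p_{G'}=(p_G^{1/2}-p_{G'}^{1/2})(p_G^{1/2}+p_{G'}^{1/2})$, one has $||p_G-p_{G'}||_2^2=\int(p_G^{1/2}-p_{G'}^{1/2})^2(p_G^{1/2}+p_{G'}^{1/2})^2\,\d\lambda$. Using the elementary inequality $(u+v)^2\leq 2(u^2+v^2)$ with $u=p_G^{1/2}$ and $v=p_{G'}^{1/2}$ yields $(p_G^{1/2}+p_{G'}^{1/2})^2\leq 2(p_G+p_{G'})$, which by the first part is bounded pointwise by $4||f||_\infty$. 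Pulling this constant out of the integral and recognizing $\int(p_G^{1/2}-p_{G'}^{1/2})^2\,\d\lambda=h^2(p_G,\,p_{G'})$ gives the claim $||p_G-p_{G'}||_2^2\leq 4||f||_\infty\, h^2(p_G,\,p_{G'})$.

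There is essentially no obstacle: the argument uses only the factorization above, the inequality $(u+v)^2\leq 2(u^2+v^2)$, and the pointwise bound from the first part, and no integrability hypothesis beyond $||f||_\infty<+\infty$ is invoked. The constant $4$ is precisely what this two-step estimate produces, and $||f||_\infty=1/2$ for the standard Laplace kernel, which is the case used in the main text.
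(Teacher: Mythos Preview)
Your proof is correct. The paper does not actually supply a proof of this lemma: it is introduced as ``the following known result, which is reported for the reader's convenience'' and stated without argument, so there is nothing to compare against beyond confirming that your derivation is sound---which it is, via the standard factorization $p_G-p_{G'}=(p_G^{1/2}-p_{G'}^{1/2})(p_G^{1/2}+p_{G'}^{1/2})$ together with the pointwise bound $p_G+p_{G'}\leq 2||f||_\infty$.
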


\smallskip

We now state and prove an inequality translating the $L^2$-norm and the
Hellinger distance between mixed densities into any $L^p$-Wasserstein distance, $p\geq 1$, between the corresponding mixing distributions.

\begin{lemma}\label{lem:2}
Let $G$ and $G'$ be probability measures on some Borel-measurable space $(\mathscr{Y},\,\mathscr{B}(\mathscr{Y}))$, $\mathscr Y\subseteq\mathbb{R}$, such that
the associated moment generating functions
$M_G(s)$ and $
M_{G'}(s)$ are finite for all $|s|<s_0$, with $0<s_0<1$.
Let $f$ be a probability density function on $\mathbb{R}$, with Fourier transform $\hat f$ satisfying,
for some real number $\beta>0$, the condition
\begin{equation}\label{eq:ft}
\inf_{t\in\mathbb{R}}(1+|t|^\beta)|\hat f(t)|>0.
\end{equation}
Let $d$ stand for the $L^2$-distance between the mixed densities $p_G=G\ast f$ and $p_{G'}=
G'\ast f$. Then, for any real number
$p\geq 1$,
\begin{equation}\label{eq:wasserstein}
\hspace*{-0.3cm}
W_p(G,\,G')\lesssim d^{1/(p+\beta)}\pt{\log\frac{1}{d} }^{(p+1/2)/(p+\beta)}\quad
\mbox{ for }\, d=||p_G-p_{G'}||_2 \, \mbox{ small enough}.
\end{equation}
If, in addition, $||f||_\infty<+\infty$, then the upper bound in \eqref{eq:wasserstein} also holds for $d$ being the Hellinger distance,
$d=h(p_G,\,p_{G'})$.
\end{lemma}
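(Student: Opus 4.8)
The strategy is the classical Fourier-analytic argument for inversion inequalities: control the Wasserstein distance between $G$ and $G'$ by introducing a smooth test function, split its contribution into a ``low-frequency'' part, handled via the Fourier transforms of the mixtures, and a ``high-frequency'' tail, handled via the moment generating function hypothesis. More precisely, I would start from the Kantorovich--Rubinstein-type dual representation of $W_p$, or rather work directly with $\int \psi \,\mathrm{d}(G-G')$ for suitable Lipschitz (or, after iterating, $p$-smooth) $\psi$, convolve $\psi$ with a band-limited kernel of bandwidth $\delta^{-1}$ to obtain $\psi_\delta$ whose Fourier transform is supported in $[-1/\delta,1/\delta]$, and write
\[
\int \psi\,\mathrm{d}(G-G')=\int(\psi-\psi_\delta)\,\mathrm{d}(G-G')+\int \psi_\delta\,\mathrm{d}(G-G').
\]
The first term is the approximation error of the band-limiting, bounded by a power of $\delta$ times a Sobolev-type norm of $\psi$; the second is estimated by passing to Fourier transforms, $\int\psi_\delta\,\mathrm{d}(G-G')=\tfrac1{2\pi}\int \hat\psi_\delta(t)\,\overline{(\hat G(t)-\hat G'(t))}\,\mathrm{d}t$, and then writing $\hat G-\hat G'=(\hat p_G-\hat p_{G'})/\hat f$, using condition \eqref{eq:ft} which gives $|\hat f(t)|^{-1}\lesssim 1+|t|^\beta\lesssim \delta^{-\beta}$ on the support of $\hat\psi_\delta$. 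A Cauchy--Schwarz step then bounds this by $\|\hat p_G-\hat p_{G'}\|_2$ (i.e.\ $\|p_G-p_{G'}\|_2$ up to $2\pi$, by Plancherel) times $\delta^{-\beta-1/2}$ or so. Optimizing the trade-off $\delta^{\text{power}}$ versus $d\,\delta^{-\beta-1/2}$ in $\delta$ produces the claimed exponent $1/(p+\beta)$.

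\textbf{Handling the unbounded support.} The genuinely new ingredient compared to Gao and van der Vaart (2016) is that $G,G'$ need not be compactly supported, so the test function $\psi$ — which for $W_p$ one wants to behave like $|y|^p$ near the diagonal and grows — cannot be assumed bounded, and the approximation-error term $\int(\psi-\psi_\delta)\,\mathrm{d}(G-G')$ is no longer controlled just by $\|\psi-\psi_\delta\|_\infty$. Here I would truncate: fix a threshold $R=R_d$ growing slowly (like a multiple of $\log(1/d)$), split $G-G'$ into its restriction to $[-R,R]$ and the tail, and bound the tail contribution to $W_p$ by $\int_{|y|>R}|y|^p\,\mathrm{d}(G+G')(y)$, which by Markov applied to the moment generating function is at most $e^{-sR}[M_G(p\text{-ish shift})+\cdots]$ — decaying polynomially in $d$ once $R\asymp \log(1/d)$, for any fixed $0<s<s_0$. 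This is exactly why the hypothesis is finiteness of $M_G,M_{G'}$ on $(-s_0,s_0)$ and why the power of $\log(1/d)$, namely $(p+1/2)/(p+\beta)$, shows up in \eqref{eq:wasserstein}. On the truncated part one runs the Fourier argument above, now with all quantities controlled on $[-R,R]$.

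\textbf{The Hellinger addendum.} Once \eqref{eq:wasserstein} is established for $d=\|p_G-p_{G'}\|_2$, the version for $d=h(p_G,p_{G'})$ is immediate from Lemma \ref{lem:l2hel}: when $\|f\|_\infty<+\infty$, inequality \eqref{eq:hel^2} gives $\|p_G-p_{G'}\|_2\le 2\|f\|_\infty^{1/2}\,h(p_G,p_{G'})$, so an $L^2$-bound is, up to a constant and for $h$ small, also a bound in terms of $h$; the function $t\mapsto t^{1/(p+\beta)}(\log(1/t))^{(p+1/2)/(p+\beta)}$ is increasing near $0$, so the inequality is preserved under replacing the argument by a constant multiple. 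The main obstacle I anticipate is bookkeeping: choosing the test function $\psi$ correctly for general $p\geq1$ (for $p>1$ one needs $\psi$ to be $\lceil p\rceil$ times differentiable with appropriately bounded derivatives, or to iterate the $p=1$ estimate), and matching the three free parameters — the bandwidth $\delta$, the truncation level $R$, and the Fourier Cauchy--Schwarz exponent — so that every error term is simultaneously $\lesssim d^{1/(p+\beta)}(\log(1/d))^{(p+1/2)/(p+\beta)}$. The Fourier estimate itself and the tail estimate are routine; making their trade-off yield precisely those exponents is where care is required.
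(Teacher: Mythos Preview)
Your overall plan is sound and would likely lead to the same bound, but it takes a genuinely different route from the paper's proof. The paper does \emph{not} use the Kantorovich--Rubinstein dual picture or band-limited test functions at all. Instead it smooths the \emph{measures} themselves by a Gaussian $\Phi_h$ and uses the triangle inequality for $W_p$:
\[
W_p^p(G,G')\le W_p^p(G,G\ast\Phi_h)+W_p^p(G\ast\Phi_h,G'\ast\Phi_h)+W_p^p(G'\ast\Phi_h,G').
\]
The two outer terms are bounded by $h^p$ via the obvious coupling $(Y,Y+hW)$, which works uniformly in $p\ge1$ with no test-function bookkeeping. For the middle term the smoothed measures have densities $(G\ast\phi_h)$, $(G'\ast\phi_h)$, and the paper invokes the density-based bound from Gao and van der Vaart (2016), Lemma~7, namely $W_p^p\lesssim\int|x|^p|(G-G')\ast\phi_h(x)|\,\d x$. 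This integral is then split at a level $M$: on $|x|\le M$ one uses Cauchy--Schwarz and the Fourier identity $\widehat{(G-G')\ast\phi_h}=(\hat p_G-\hat p_{G'})\hat\phi_h/\hat f$ together with \eqref{eq:ft} to get $\lesssim M^{p+1/2}h^{-\beta}d$; on $|x|>M$ the moment-generating-function hypothesis gives $\lesssim e^{-uM}$, exactly your tail mechanism. Optimising $h$ and $M$ yields the stated exponents.

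So the Fourier step, the truncation at $M\asymp\log(1/d)$, and the Hellinger addendum via Lemma~\ref{lem:l2hel} are common to both arguments. What the paper's route buys is that the Gaussian-smoothing trick eliminates precisely the obstacle you flagged: there is no need to choose a test function $\psi$ adapted to $p$, no iteration of the $p=1$ case, and no delicate band-limiting of unbounded functions --- the parameter $h$ plays the role of your $\delta$, but the passage to densities makes the $|x|^p$ weight explicit and the split at $M$ immediate. Your approach would work (at least for $p=1$, and with more effort for $p>1$), but the paper's is shorter and sidesteps the bookkeeping you anticipated.
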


\begin{proof}
For any real number $h>0$, by the triangle inequality, we have
\begin{equation}\label{eq:wass}
W^p_p(G,\,G')\leq W^p_p(G,\, G\ast\Phi_h) + W^p_p(G\ast \Phi_h,\,G'\ast \Phi_h) +W^p_p(G'\ast \Phi_h,\, G'),
\end{equation}
where $\Phi_h$ stands for a zero-mean Gaussian probability measure with variance $h^2$, whose density is denoted by $\phi_h(\cdot):=h^{-1}\phi(\cdot/h)$, for $\phi$ the density of a standard normal r.v. $W$. The first and third terms on the right-hand side of
\eqref{eq:wass} can be bounded above as follows. By standard arguments, see, for instance, the proof of Theorem 2 in Nguyen~(2013), pp. 389--391,
\begin{equation}\label{eq:max}
\max\{W_p^p(G,\, G\ast\Phi_h),\, W_p^p(G'\ast\Phi_h,\,G')\}\leq E[|hW|^p]\lesssim h^p
\end{equation}
because $E[|W|^p]<+\infty$ for every real number $p>0$, hence, \emph{a fortiori}, for every real $p\geq1$.
Concerning the second term on the right-hand side of
\eqref{eq:wass}, reasoning as in Lemma 7 of Gao and van der Vaart~(2016), pp. 621--622, for any real number $M>0$,
\[W_p^p
(G\ast\Phi_h,\,G'\ast\Phi_h)\lesssim
\pt{\int_{|x|\leq M}+\int_{|x|>M}}|x|^p|
(G-G')\ast\phi_h(x)|\,\d x=:T_1+T_2,\]
where, for every $0<h \leq 1$,
\begin{equation}\label{eq:t1}
T_1\lesssim M^{p+1/2}||(G-G')\ast\phi_h||_2 \lesssim M^{p+1/2} h^{-\beta} ||p_G-p_{G'}||_2
\end{equation}
because $\sup_{t\in\mathbb{R}}|\hat \phi(h
t)|/|\hat f(t)|\lesssim h^{-\beta}$ in virtue of assumption \eqref{eq:ft}. To see it, note that assumption \eqref{eq:ft} implies the existence of a finite constant $L_f>0$ such that $(1+|t|^\beta)|\hat f(t)|\geq L_f$ for all
$t\in\mathbb{R}$. Therefore, if $0<h \leq 1$, 
\[\sup_{t\in\mathbb{R}}\frac{|\hat \phi(h
t)|}{|\hat f(t)|}\leq \frac{1}{L_f}
\sup_{t\in\mathbb{R}}[(1+|ht|^\beta)|\hat \phi(h
t)|]\times \sup_{t\in\mathbb{R}}
\bigg(\frac{1+|t|^\beta}{1+|ht|^\beta}\bigg)\lesssim h^{-\beta}.
\]
If $||f||_\infty<+\infty$, then the $L^2$-distance between $p_G$ and $p_{G'}$ in \eqref{eq:t1} can be replaced with the Hellinger distance (see Lemma \ref{lem:l2hel}), so that
\[
 T_1\lesssim M^{p+1/2} h^{-\beta} h(p_G,\,p_{G'}).
\]
We now deal with the term $T_2$. We preliminarily derive an instrumental inequality. 
For every $x\in\mathbb{R}$ and real numbers $p,\,u>0$, 
\[
\frac{p}{u}
e^{u|x|/p}=\frac{p}{u}\sum_{j=0}^{+\infty}\frac{(u|x|/p)^j}{j!}\geq |x|,
\]
whence
\begin{equation}\label{eq:23}
|x|^p\leq (p/u)^p e^{u|x|}<(p/u)^p (e^{-ux}+e^{ux}).
\end{equation}
Now fix any number $0<u<s_0<1$. Applying the inequalities in \eqref{eq:23}
and taking into account the expression of the moment generating function of a standard Gaussian distribution $M_{\Phi}(s)=e^{s^2/2}$, $s\in\mathbb{R}$, we get 
\begin{align*}
\int_{-\infty}^{+\infty}\max\{1,\,|x|^p\}e^{u|x|}\phi_h(x)\,\d x&\leq 
\int_{-\infty}^{+\infty}\max\{e^{u|x|},\,(p/u)^p e^{2u|x|}\}\phi_h(x)\,\d x\\
&<2\max\{e^{(u h)^2/2},\,(p/u)^p e^{2(u h)^2}\}\\
&<
2\max\{e^{s_0^2/2},\,(p/u)^p e^{2s_0^2}\},
\end{align*}
namely, for fixed $u$, the above integral can be bounded above by a constant that is fixed throughout and can therefore be neglected when bounding $T_2$. Hence, 
\[\begin{split}
T_2 &\lesssim e^{-uM} \int_{|x|>M}|x|^pe^{u|x|}[(G+G')\ast\phi_h(x)]\,\d x\\
&\lesssim e^{-uM}\int_{\mathscr Y} (1+|y|^p)e^{u|y|}\pt{\int_{-\infty}^{+\infty}\max\{1,\,|x|^p\}e^{u|x|}\phi_h(x)\,\d x}\,\d (G+G')(y)\\
& \lesssim e^{-uM}\int_{\mathscr Y}(1+|y|^p)e^{u|y|}\,\d (G+G')(y) \lesssim e^{-uM}
\end{split}\]
because
\[\begin{split}
\int_{\mathscr Y}e^{u|y|}\,\d (G+G')(y) &<\int_{\mathscr Y}
(e^{-uy}+e^{uy})\,\d (G+G')(y)\\
&=(M_G+M_{G'})(-u)+(M_G+M_{G'})(u)<+\infty
\end{split}\]
and, for any fixed real number $0<\xi<1$ such that $0<s:=(\xi+u)<s_0$, by the inequalities in \eqref{eq:23},
\[\begin{split}
\int_{\mathscr Y}|y|^pe^{u|y|}\,\d (G+G')(y) &< 
(p/\xi)^p
\int_{\mathscr Y}e^{(\xi+u)|y|}\,\d (G+G')(y)\\&= 
(p/\xi)^p
\int_{\mathscr Y} e^{s|y|}\,\d (G+G')(y)\\&< (p/\xi)^p\int_{\mathscr Y}
(e^{-sy}+e^{sy})\,\d (G+G')(y)\\&=(p/\xi)^p[(M_G+M_{G'})(-s)+(M_G+M_{G'})(s)]<+\infty
\end{split}\]
by the assumption that both $G$ and $G'$ 
have finite moment generating functions on $(-s_0,\,s_0)$, for $0<s_0<1$. 
Thus,
\begin{equation}\label{eq:t2}
T_2\lesssim e^{-u M}.
\end{equation}
Combining partial results in \eqref{eq:max}, \eqref{eq:t1} and \eqref{eq:t2},
we get
\begin{equation}\label{eq:wasserstein2}
W^p_p(G,\,G')\lesssim h^p + M^{p+1/2}h^{-\beta}d+e^{-uM}
\end{equation}
and the conclusion follows by
minimizing the expression in \eqref{eq:wasserstein2} with respect to $h$ and $M$, which, for sufficiently small $d$, implies taking
$M=O(\log (1/d))$ and $h^{p+\beta}=O(d \log^{p+1/2} (1/d))$.
\qed
\end{proof}

\begin{remark}
The standard Laplace kernel density is bounded, with $||f||_\infty=1/2$, and  satisfies condition \eqref{eq:ft} for $\beta=2$.
\end{remark}
\end{theopargself}


%
\bibliographystyle{}
\bibliography{}

\begin{thebibliography}{}
%
%



%


%












\item[] Ball K, Pajor A (1990) The entropy of convex bodies with \vir{few} extreme points. In: 
M\"{u}ller PFX, Schachermayer W (eds) Proceedings of the Conference in Geometry of Banach Spaces at Strobl, Austria, 1989. London Mathematical Society Lecture Note Series, Vol. 158, pp 25--32

  

\item[] Barron AR (1988) The exponential convergence of posterior probabilities with implications 
for Bayes estimators of density functions. Technical Report $\# 7$, University of Illinois at Urbana-Champaign





\item[] Birg\'{e} L, Massart P (1993) Rates of convergence for minimum contrast estimators. 
Probab Theory Rel 97: 113--150 





\item[] Dall'Aglio, G (1956) Sugli estremi dei momenti delle funzioni di ripartizione doppia. (Italian)
Ann Scuola Norm Sup Pisa (3) 10: 35--74








\item[] Davis KB (1977) Mean integrated square error properties of density estimates.
Ann Stat 5: 530--535  








\item[] Dedecker J, Fischer A, Michel B (2015) Improved rates for Wasserstein deconvolution with ordinary smooth error in dimension one. Electron J Stat 9: 234--265  






\item[] Devroye L (1990) A note on linnik's distribution. 
Stat Probab Lett 9: 305--306 








\item[] Diaconis P, Freedman D (1986) On the consistency of Bayes estimates. Ann Stat 14: 1--26






\item[] Donnet S, Rivoirard V, Rousseau J, Scricciolo C (2018) Posterior concentration
rates for empirical Bayes procedures with applications to Dirichlet process
mixtures. Bernoulli 24: 231--256 

  

\item[] Doss H, Sellke T (1982) The tails of probabilities chosen from a Dirichlet prior.
Ann Stat 10: 1302--1305 








\item[] Fan J (1991a) Global behavior of deconvolution kernel estimates. Stat Sin 1: 541--551



\item[] Fan J (1991b) On the optimal rates of convergence for nonparametric deconvolution problems. Ann Stat 19: 1257--1272






\item[] Gao F, van der Vaart A (2016) Posterior contraction rates for deconvolution of Dirichlet-Laplace mixtures. Electron J Stat 10: 608--627





\item[] Ghosal S, Ghosh JK, van der Vaart AW (2000) Convergence rates of
posterior distributions. Ann Stat 28: 500--531





\item[] Ghosal S, van der Vaart A (2007)
Posterior convergence rates of Dirichlet mixtures at smooth densities. 
Ann Stat 35: 697--723 








\item[] Ghosal S, van der Vaart AW (2001)
Entropies and rates of convergence for maximum likelihood and
Bayes estimation for mixtures of normal densities. Ann Stat
29: 1233--1263 





\item[] Ghosal S, van der Vaart A (2007)
Convergence rates of posterior distributions for noniid observations. Ann Stat
35: 192--223 




\item[] Ghosh JK, Ramamoorthi RV (2003) Bayesian nonparametrics. Springer Series in Statistics. Springer-Verlag New York 



\item[] Gin\'e E, Nickl R (2011)
Rates of contraction for posterior distributions in $L^r$-metrics,
$1\leq r\leq\infty$. Ann Stat 39: 2883--2911 









\item[] Hall P, Lahiri SN (2008) Estimation of distributions, moments and quantiles in deconvolution problems. Ann Stat 36: 2110--2134







\item[] LeCam L (1973) Convergence of estimates under dimensionality restrictions. Ann Stat 1: 38--53 





\item[] Lindsay BG (1995) Mixture models: theory, geometry and applications. NSF-CBMS Regional Conference Series in Probability and Statistics, Vol. 5. Institute of Mathematical Statistics,
Hayward, CA 


\item[] Lo AY (1984) On a class of Bayesian nonparametric estimates: I. Density estimates. Ann Stat 12: 351--357





\item[] Meister A (2009) Deconvolution problems in nonparametric statistics. 
Lecture Notes in Statistics, Vol. 193. Springer, Berlin, Heidelberg



\item[] Nguyen X (2013) Convergence of latent mixing measures in finite and infinite mixture models.
Ann Stat 41: 370--400 






\item[] Parzen E (1962) On estimation of a probability density function and mode.
Ann Math Stat 33: 1065--1076 





\item[] Scricciolo C (2006) Convergence rates for Bayesian density estimation of infinite-dimensional exponential families. Ann Stat 34: 2897--2920








\item[] Scricciolo C (2007) On rates of convergence for Bayesian density estimation. 
Scand J Stat 34: 626--642



\item[] Scricciolo C (2011) Posterior rates of convergence for Dirichlet mixtures of exponential power densities. Electron J Stat 5: 270--308






\item[] Scricciolo C (2014) Adaptive Bayesian density estimation in $L^{p}$-metrics with Pitman-Yor or Normalized Inverse-Gaussian process kernel mixtures. Bayesian Anal 9: 475--520





\item[] Shorack GR, Wellner JA (1986) Empirical processes with applications to statistics. Wiley, New York 



\item[] Tsybakov AB (2004) Introduction \`{a} l'estimation non-param\'{e}trique.
Springer-Verlag Berlin Heidelberg






\item[] Van de Geer S (1993) Hellinger-consistency of certain nonparametric maximum likelihood estimators. Ann Stat 21: 14--44 



\item[] Van de Geer S (1995) Asymptotic normality in mixture models. ESAIM Probab
Stat 1: 17--33 




\item[] Van de Geer S (1996) Rates of convergence for the maximum likelihood estimator in mixture models. J Nonparametr Stat 6: 293--310 


\item[] Van de Geer SA (2000) Empirical processes in M-estimation. Cambridge University Press, New York 


\item[] Watson GS, Leadbetter MR (1963)
On the estimation of the probability density, I.
Ann Math Stat 34: 480--491


\item[] Wong WH, Shen X (1995) Probability inequalities for likelihood ratios and convergence rates of sieve MLES. Ann Stat 23: 339--362 





\end{thebibliography}
%
%

\end{document}